\title{Quantifier Elimination over Finite Fields Using Gr\"obner~Bases\thanks{This research was sponsored by National Science
   Foundation under contracts no.~CNS0926181,
   no.~CCF0541245, and no. CNS0931985,
   the SRC under contract
   no.~2005TJ1366, General Motors under contract no.~GMCMUCRLNV301, Air 
Force
   (Vanderbilt University) under contract
   no.~18727S3, the GSRC under contract no. 1041377 
(Princeton University), the Office of Naval Research under award 
no.~N000141010188, and DARPA under contract FA8650-10-C-7077.
}}
\author{Sicun Gao\and Andr\'e Platzer \and Edmund M. Clarke }
\institute{Carnegie Mellon University, Pittsburgh, PA, USA}
\begin{document}

\maketitle
\begin{abstract} 
We give an algebraic quantifier elimination algorithm for the first-order theory over any given finite field using Gr\"obner basis methods. The algorithm relies on the strong Nullstellensatz and properties of elimination ideals over finite fields. We analyze the theoretical complexity of the algorithm and show its application in the formal analysis of a biological controller model.
\end{abstract}

\section{Introduction}

We consider the problem of quantifier elimination of first-order logic formulas in the theory $T_q$ of arithmetic in any given finite field $F_q$. Namely, given a quantified formula $\varphi(\vec x;\vec y)$ in the language, where $\vec x$ is a vector of quantified variables and $\vec y$ a vector of free variables, we describe a procedure that outputs a quantifier-free formula $\psi(\vec y)$, such that $\varphi$ and $\psi$ are equivalent in $T_q$. 

Clearly, $T_q$ admits quantifier elimination. A naive algorithm is to enumerate the exponentially many assignments to the free variables $\vec y$, and for each assignment $\vec a\in F^{|\vec y|}$, evaluate the truth value of the closed formula $\varphi(\vec x;\vec a)$ (with a decision procedure). Then the quantifier-free formula equivalent to $\varphi(\vec x;\vec y)$ is $\bigvee_{\vec a\in A}(\vec y = \vec a)$, where $A=\{\vec a\in F^{|\vec y|}: \varphi(\vec x;\vec a) \mbox{ is true.}\}$. This naive algorithm {\em always} requires exponential time and space, and cannot be used in practice. Note that a quantifier elimination procedure is more general and complex than a decision procedure: Quantifier elimination yields an equivalent quantifier-free formula  while a decision procedure outputs a yes/no answer. For instance, fully quantified formulas over finite fields can be ``bit-blasted'' and encoded as Quantified Boolean Formulas (QBF), whose truth value can, in principle, be determined by QBF decision procedures. However, for formulas with free variables, the use of decision procedures can only serve as an intermediate step in the naive algorithm mentioned above, and does not avoid the exponential enumeration of values for the free variables. We believe there has been no investigation into quantifier elimination procedures that can be practically used for this theory. 

Such procedures are needed, for instance, in the formal verification of cipher programs involving finite field arithmetic~\cite{SmithD08,using} and polynomial dynamical systems over finite fields that arise in systems biology~\cite{virus,bio2,poly}. Take the S2VD virus competition model~\cite{virus} as an example, which we study in detail in Section 6: The dynamics of the system is given by a set of polynomial equations over the field $F_4$. We can encode image computation and invariant analysis problems as quantified formulas, which are solvable using quantifier elimination. As is mentioned in~\cite{virus}, there exists no verification method suitable for such systems over general finite fields so far.

In this paper we give an algebraic quantifier elimination algorithm for $T_q$. The algorithm relies on strong Nullstellensatz and Gr\"obner basis methods. We analyze its theoretical complexity, and show its practical application. 

In Section 3, we exploit the strong Nullstellensatz over finite fields and properties of elimination ideals, to show that Gr\"obner basis computation gives a way of eliminating quantifiers in formulas of the form $\exists \vec x (\bigwedge_i \alpha_i)$, where the $\alpha_i$s are atomic formulas and $\exists \vec x$ is a quantifier block. We then show, in Section 4, that the DNF-expansion of formulas can be avoided by using standard ideal operations to ``flatten'' the formulas. Any quantifier-free formula can be transformed into conjunctions of atomic formulas at the cost of introducing existentially quantified variables. This transformation is linear in the size of the formula, and can be seen as a generalization of the {\em Tseitin transformation}. Combining the techniques, we obtain a complete quantifier elimination algorithm. 

In Section 5, we analyze the complexity of our algorithm, which depends on the complexity of Gr\"obner basis computation over finite fields. For ideals in $F_q[\vec x]$ that contain $x_i^q-x_i$ for each $x_i$, Buchberger's Algorithm computes Gr\"obner bases within exponential time and space~\cite{Lakshman}. Using this result, the worst-case time/space complexity of our algorithm is bounded by $q^{O(|\varphi|)}$ when $\varphi$ contains no more than two {\em alternating blocks} of quantifiers, and $q^{q^{O(|\varphi|)}}$ for more alternations. Recently a polynomial-space algorithm for Gr\"obner basis computation over finite fields has been proposed in \cite{pspaceGB}, but it remains theoretical so far. If the new algorithm can be practically used, the worst-case complexity of quantifier elimination is $q^{O(|\varphi|)}$ for arbitrary alternations. 

Note that this seemingly high worst-case complexity, as is common for Gr\"obner basis methods, does not prevent the algorithm from being useful on practical problems. This is crucially different from the naive algorithm, which always requires exponential cost, not just in worst cases. In Section 6, we show how the algorithm is successfully applied in the analysis of a controller design in the S2VD virus competition model~\cite{virus}, which is a polynomial dynamical system over finite fields. The authors developed control strategies to ensure a safety property in the model, and used simulations to conclude that the controller is effective. However, using the quantifier elimination algorithm, we found bugs that show inconsistency between specifications of the system and its formal model. This shows how our algorithm can provide a practical way of extending formal verification techniques to models over finite fields.

Throughout the paper, omitted proofs are provided in the Appendix.

\section{Preliminaries}
\subsection{Ideals, Varieties, Nullstellensatz, and Gr\"obner Bases}

Let $k$ be any field and $k[x_1,...,x_n]$ the polynomial ring over $k$ with indeterminates $x_1,...,x_n$. An {\em ideal} generated by $f_1,...,f_m\in k[x_1,...,x_n]$ is $\langle f_1,...,f_m\rangle=\{h:h=\sum_{i=1}^m g_if_i$, $g_i\in k[x_1,...,x_n]\}.$ Let $\vec a \in k^n$ be an arbitrary point, and $f\in k[x_1,...,x_n]$ be a polynomial. We say that $f$ {\em vanishes} on $\vec a$ if $f(\vec a) = 0$. 
\begin{definition}
For any subset $J$ of $k[x_1,...,x_n]$, the {\bf affine variety} of $J$ over $k$ is {$V_n(J)=\{\vec  a\in k^n:\forall f\in J, f(\vec  a)=0\}.$}
\end{definition}
\begin{definition}
For any subset $V$ of $k^n$, the {\bf vanishing ideal}
of $V$ is defined as $I(V)=\{f\in k[x_1,...,x_n]: \forall \vec  a\in V, f(\vec 
a)=0\}.$
\end{definition}
\begin{definition}
Let $J$ be any ideal in $k[x_1,...,x_n]$, the {\bf radical} of $J$ is defined as $\sqrt J=\{f\in k[x_1,...,x_n]:\exists m\in \mathbb{N}, f^m\in J\}.$
\end{definition}
 When $J=\sqrt J$, we say $J$ is a radical ideal. The celebrated Hilbert Nullstellensatz established the correspondence between radical ideals and varieties:
\begin{theorem}[Strong Nullstellensatz~\cite{Lang}]\label{sN}
For an arbitrary field $k$, let $J$ be an ideal in $k[x_1,...,x_n]$. We have $I(V^a(J))=\sqrt J,$ where $k^a$ is the algebraic closure of $k$ and $V^a(J)=\{\vec  a \in (k^a)^n: \forall f\in J, f(\vec  a)=0\}.$
\end{theorem}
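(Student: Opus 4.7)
The plan is to follow the classical two-step proof after reducing to an algebraically closed field. Let $J^a = J \cdot k^a[\vec x]$ denote the extension of $J$. Then $V^a(J) = V(J^a)$, and $I(V^a(J))$ is naturally an ideal in $k^a[\vec x]$, so it suffices to establish the equality $I(V(J^a)) = \sqrt{J^a}$, which is the classical Strong Nullstellensatz for algebraically closed fields.

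One direction, $\sqrt{J^a} \subseteq I(V(J^a))$, is immediate: if $f^m \in J^a$ then $f^m$, and hence $f$, vanishes on every point of $V(J^a)$. For the converse I would use the standard two-step argument. First, establish the Weak Nullstellensatz over $k^a$: every proper ideal of $k^a[\vec x]$ has nonempty variety in $(k^a)^n$, or equivalently every maximal ideal has the form $\langle x_1-a_1, \ldots, x_n-a_n \rangle$. The cleanest route is through Zariski's lemma: any finitely generated algebra over a field that is itself a field must be a finite algebraic extension of that field. Applying this to $k^a[\vec x]/m$ for a maximal ideal $m$, and using that $k^a$ is algebraically closed, forces $k^a[\vec x]/m = k^a$; the images of the $x_i$ in this isomorphism supply the coordinates $a_i$.

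Second, I would apply the Rabinowitsch trick. Suppose $f \in I(V(J^a))$ and let $f_1, \ldots, f_s$ generate $J^a$. Introduce a fresh variable $y$ and form the ideal $J' = \langle f_1,\ldots,f_s,\, 1-yf \rangle$ in $k^a[\vec x, y]$. Its variety in $(k^a)^{n+1}$ is empty: any common zero $(\vec a, b)$ would force $\vec a \in V(J^a)$, hence $f(\vec a)=0$, contradicting $1 - b f(\vec a) = 0$. By Weak Nullstellensatz, $J'$ is the whole ring, so one can write
$$1 \;=\; \sum_{i=1}^{s} g_i(\vec x, y)\, f_i(\vec x) \;+\; h(\vec x, y)\bigl(1 - y f(\vec x)\bigr).$$
Substituting $y = 1/f$ and clearing denominators of the form $f^m$ yields an identity $f^m = \sum_i \tilde g_i(\vec x)\, f_i(\vec x) \in J^a$, so $f \in \sqrt{J^a}$, as required.

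The main obstacle is Zariski's lemma, which is the genuine algebraic input and not formal. It is typically proven either by Noether normalization followed by an integrality argument, or via the Artin--Tate lemma combined with the observation that a field cannot be a finitely generated module over a strictly smaller subring unless the extension is integral. Once Zariski's lemma is in hand, the Weak Nullstellensatz, the Rabinowitsch trick, and the reduction from arbitrary $k$ to $k^a$ all follow by essentially formal manipulations.
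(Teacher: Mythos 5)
The paper does not actually prove this theorem; it is cited from Lang without proof and used as a black box to establish the finite-field Nullstellensatz (Theorem 2.2). So there is no internal proof to compare against, and I will evaluate your argument on its own terms.

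Your core argument over the algebraically closed field $k^a$ — Zariski's lemma, hence the Weak Nullstellensatz, then the Rabinowitsch trick — is the standard route and is correct as sketched. The gap is in the reduction step at the very beginning. The theorem as stated concerns ideals of $k[\vec x]$, not $k^a[\vec x]$: both $I(V^a(J))$ (per the paper's Definition 2.2) and $\sqrt{J}$ are subsets of $k[x_1,\ldots,x_n]$. Your claim that ``$I(V^a(J))$ is naturally an ideal in $k^a[\vec x]$'' conflates this with the vanishing ideal taken in $k^a[\vec x]$, and proving $I_{k^a}(V(J^a)) = \sqrt{J^a}$ is not the same statement. What you actually get from the classical result is
\[
I(V^a(J)) \;=\; I_{k^a}(V(J^a)) \cap k[\vec x] \;=\; \sqrt{J^a} \cap k[\vec x],
\]
so to finish you must show $\sqrt{J^a} \cap k[\vec x] = \sqrt{J}$. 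The inclusion $\supseteq$ is trivial. The inclusion $\subseteq$ amounts to showing that if $f \in k[\vec x]$ and $f^m \in J^a = J\, k^a[\vec x]$, then already $f^m \in J$; equivalently $J^a \cap k[\vec x] = J$. This is a genuine descent statement. It follows because $k \hookrightarrow k^a$ is faithfully flat, or more elementarily: write $f^m = \sum_i g_i f_i$ with $g_i \in k^a[\vec x]$ and $f_i \in J$, pass to the finite extension $L/k$ generated by the coefficients of the $g_i$, choose a $k$-basis $1 = e_1, e_2, \ldots, e_d$ of $L$, expand each $g_i = \sum_j g_{ij} e_j$ with $g_{ij} \in k[\vec x]$, and compare $e_1$-components to conclude $f^m = \sum_i g_{i1} f_i \in J$. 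You characterize the post-Zariski steps as ``essentially formal,'' but this descent lemma is a real ingredient, not a bookkeeping step, and it is missing from your writeup.
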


The method of Gr\"obner bases was introduced by Buchberger~\cite{buchberger76} for the algorithmic solution of various fundamental problems in commutative algebra. For an ideal $\langle f_1,...,f_m\rangle$ in a polynomial ring, Gr\"obner basis computation transforms $f_1,...,f_m$ to a canonical representation $\langle g_1,...,g_s\rangle=\langle f_1,...,f_m\rangle$ that has many useful properties. Detailed treatment of the theory can be found in \cite{grobnerbook}. 

\begin{definition}
Let $T=\{x_1^{\alpha_1}\cdots x_n^{\alpha_n}: \alpha_i\in N\}$ be the set of monomials in $k[x_1,...,x_n]$. A {\bf monomial ordering} $\prec$ on $T$ is a well-ordering on T satisfying \\(1)
For any $t\in T$, $1\prec t$\\(2) For all $t_1, t_2, s\in T$, $t_1\prec t_2$ then $t_1\cdot s\prec t_2\cdot s$.
\end{definition}

We order the monomials appearing in any single polynomial $f\in k[x_1,...,x_n]$ with respect to $\prec$. We write $LM(f)$ to denote the {\em leading monomial} in $f$ (the maximal monomial under $\prec$), and $LT(f)$ to denote the {\em leading term} of $f$ ($LM(f)$ multiplied by its coefficient). We write $LM(S)=\{LM(f):f\in S\}$ where $S$ is a set of polynomials.

Let $J$ be an ideal in $k[x_1,...,x_n]$. Fix any monomial order on $T$. The ideal of leading
monomials of $J$, $\langle LM(J)\rangle$, is the ideal generated by the leading monomials of all polynomials in $J$. Now we are ready to define:
\begin{definition}[Gr\"obner Basis~\cite{grobnerbook}]
A {\bf Gr\"obner basis} for $J$ is a set $GB(J)=\{g_1,...,g_s\}\subseteq J$ satisfying $\langle LM(GB(J)) \rangle=\langle
LM(J)\rangle.$ 
\end{definition}

\subsection{The First-order Theory over a Finite Field}

Let $F_q$ be an arbitrary finite field of size $q$, where $q$ is a prime power. We fix the structure to be $M_q=\langle F_q, 0, 1, +, \times\rangle$ and the signature $\mathcal{L}_q=\langle 0,1, +,\times \rangle$ (``$=$'' is a logical predicate). For quantified formulas, we write $\varphi(\vec x; \vec y)$ to emphasize that the $\vec x$ is a vector of quantified variables and $\vec y$ is a vector of free variables.

The standard first-order theory for each $M_q$ consists of the usual axioms for fields \cite{modelbook} plus $\exists x_1\cdots\exists x_q ((\bigwedge_{1\leq i< j \leq q} x_i \neq x_j )\wedge \forall y(\bigvee_i y = x_i))$, which fixes the size of the domain. We write this theory as $T_q$. In $\mathcal{L}_q$, we consider all the atomic formulas as polynomial equations $f=0$. The {\em realization} of a formula is the set of assignments to its free variables that makes the formula true over $M_q$. Formally:

\begin{definition}[Realization]
Let $\varphi(x_1,...,x_n)$ be a formula with free variables $\vec x = (x_1,...,x_n)$. The realization of $\varphi$, written as $\llbracket \varphi \rrbracket \subseteq F_q^n$, is inductively defined as:
\begin{itemize}
\item $\llbracket p = 0\rrbracket =_{df} V(\langle p\rangle) \subseteq F_q^n$ (in particular, $\llbracket \top\rrbracket = F_q^n$)

\item $\llbracket \neg \psi \rrbracket = F_q^n\setminus \llbracket \psi\rrbracket$

\item $\llbracket \psi_1\wedge \psi_2 \rrbracket = \llbracket \psi_1\rrbracket \cap \llbracket \psi_2\rrbracket$

\item $\llbracket\exists x_0. \psi(x_0,\vec x) \rrbracket = \{\langle a_1,...,a_n\rangle \in F_q^n: \exists a_0\in F_q, \mbox{ such that } \langle a_0,...,a_n\rangle \in \llbracket\psi\rrbracket \}$
\end{itemize}
\end{definition}
\begin{proposition}[Fermat's Little Theorem]
Let $F_q$ be a finite field. For any $a\in F_q$, we have $a^q-a=0$. Conversely, $V(x^q-x) = \llbracket x^q-x\rrbracket = F_q$.
\end{proposition}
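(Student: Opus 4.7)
The plan is to prove the two halves separately and then observe the realization identity falls out from the definitions.

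For the first claim, I would invoke Lagrange's theorem on the multiplicative group $F_q^* = F_q \setminus \{0\}$, which has order $q-1$. For any $a \in F_q^*$, the order of $a$ divides $q-1$, hence $a^{q-1} = 1$, and multiplying by $a$ gives $a^q = a$, i.e.\ $a^q - a = 0$. The remaining case $a = 0$ is immediate since $0^q - 0 = 0$. So every element of $F_q$ is a root of $x^q - x$.

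For the converse, the variety $V(x^q-x)$ is taken in $F_q$ by definition (see Definition of $V_n(J)$ with $n=1$ and $k = F_q$), so $V(x^q-x) \subseteq F_q$ is automatic. The forward inclusion $F_q \subseteq V(x^q - x)$ is exactly the first claim. Hence $V(x^q-x) = F_q$. Finally, by the first bullet of the realization definition, $\llbracket x^q - x \rrbracket = V(\langle x^q - x \rangle) = V(x^q - x)$, closing the chain.

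There is essentially no obstacle: the argument is the standard textbook proof, and the only nontrivial input is Lagrange's theorem applied to the cyclic (in fact, any) finite group $F_q^*$. One could alternatively derive $a^q = a$ by noting that $F_q$ is the splitting field of $x^q - x$ over its prime subfield, but the Lagrange argument is shorter and self-contained given what is already available in the preliminaries. No additional lemmas need to be developed.
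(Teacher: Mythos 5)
Your proof is correct and complete: the Lagrange argument on the multiplicative group $F_q^*$ handles nonzero $a$, the case $a=0$ is trivial, and the reverse inclusion $V(x^q-x)\subseteq F_q$ is automatic because $V$ is defined over $F_q$ itself (not its algebraic closure), while the final equality $\llbracket x^q-x\rrbracket = V(\langle x^q-x\rangle)$ is just the first clause of the realization definition. The paper states this proposition as a well-known fact and provides no proof of its own, so there is nothing to compare against; your argument is the standard textbook one and needs no further elaboration.
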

\begin{definition}[Quantifier Elimination]
$T_q$ admits quantifier elimination if for any formula $\varphi(\vec x;\vec y)$, where the $\vec x$ variables are quantified and the $\vec y$ variables free, there exists a quantifier-free formula $\psi(\vec y)$ such that $\llbracket\varphi(\vec x; \vec y)\rrbracket = \llbracket \psi(\vec y) \rrbracket$. 
\end{definition}

\subsection{Nullstellensatz in Finite Fields}

The strong Nullstellensatz admits a special form over finite fields. This was proved for prime fields in \cite{germ} and used in~\cite{poly,Marchand98onthe}. Here we give a short proof that the special form holds over arbitrary finite fields, as a corollary of Theorem~\ref{sN}.

\begin{lemma}\label{first-lemma}
For any ideal $J\subseteq F_q[x_1,...,x_n]$, $J+ \langle
x_1^q-x_1,...,x_n^q-x_n\rangle$ is radical.
\end{lemma}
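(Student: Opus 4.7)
My plan is to bypass the Nullstellensatz via a direct Frobenius calculation. Set $J' = J + \langle x_1^q - x_1, \ldots, x_n^q - x_n\rangle$; the inclusion $J' \subseteq \sqrt{J'}$ is trivial, so I only need to show that $f^m \in J'$ implies $f \in J'$ for every $f \in F_q[\vec{x}]$ and every $m \geq 1$.

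The crucial preliminary is the congruence $f^q \equiv f \pmod{\langle x_1^q - x_1, \ldots, x_n^q - x_n\rangle}$ for every $f \in F_q[\vec{x}]$. Writing $q = p^r$ with $p = \mathrm{char}(F_q)$, the $q$-power map is a ring endomorphism of $F_q[\vec x]$, so $f(\vec x)^q = f(x_1^q, \ldots, x_n^q)$ (coefficients, living in $F_q$, are fixed by Fermat's Little Theorem); reducing each $x_i^q$ to $x_i$ modulo the field equations yields the claim.

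Iterating gives $f^{q^k} \equiv f \pmod{\langle x_i^q - x_i\rangle}$ for every $k \geq 0$. Now, given $f^m \in J'$, I would choose $k$ large enough that $q^k \geq m$; then $f^{q^k} = f^{q^k - m} \cdot f^m \in J'$, and since $f^{q^k} - f \in \langle x_i^q - x_i\rangle \subseteq J'$ we conclude $f \in J'$.

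The main ``obstacle'' is really just noticing the identity $f^q \equiv f$; once that is in hand, a single raise-to-a-power-of-$q$ step collapses any radical-membership witness into a membership witness, with no appeal to varieties or to an algebraic closure. I expect the authors to proceed differently, following their hint to Theorem~\ref{sN}: write $\sqrt{J'} = I(V^a(J'))$, observe that $V^a(J') \subseteq F_q^n$ because each $x_i^q - x_i$ splits completely over $F_q$, and then push the resulting inclusion $I(V(J')) \subseteq J'$ through a Lagrange-interpolation argument on $F_q^n$ (defining indicator polynomials $\chi_{\vec a}(\vec x) = \prod_{i} \prod_{b \neq a_i}(x_i-b)(a_i-b)^{-1}$ and normalizing each $g_{\vec a} \chi_{\vec a}$, for a witness $g_{\vec a} \in J'$ with $g_{\vec a}(\vec a) \neq 0$, to lie in $J'$ while agreeing with $\chi_{\vec a}$ on $F_q^n$).
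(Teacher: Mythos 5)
Your proof is correct and, contrary to your closing speculation, it is essentially the same argument the paper uses: the appendix proof also reduces the radical-membership question to the identity $[g]^q=[g]$ in $F_q[\vec x]/\langle x_i^q-x_i\rangle$ (proved there by induction on the polynomial, where you invoke the $q$-power Frobenius directly) and then collapses $f^s\in J'$ to $f\in J'$ by multiplying up to a power of $q$ (the paper reduces $s$ modulo $q-1$ to assume $s<q$ and writes $[f]=[f]^s[f]^{q-s}$; you raise to $q^k\ge m$). The paper does \emph{not} route this lemma through Theorem~\ref{sN}; the Nullstellensatz is used afterward, in Theorem~\ref{null}, with this lemma as an input.
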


\begin{theorem}[Strong Nullstellensatz in Finite Fields]
\label{null}
For an arbitrary finite field $F_q$, let $J\subseteq F_q[x_1,...,x_n]$ be an ideal,
then $$I(V(J))=J+\langle x_1^q-x_1,...,x_n^q-x_n\rangle.$$
\end{theorem}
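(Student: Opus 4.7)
The plan is to bootstrap from the classical Strong Nullstellensatz (Theorem~\ref{sN}) using Lemma~\ref{first-lemma}. Write $J' = J + \langle x_1^q-x_1,\ldots,x_n^q-x_n\rangle$. The key identity needed is $V^a(J') = V(J)$ as subsets of $(F_q^a)^n$. Then applying Theorem~\ref{sN} to $J'$ yields $I(V^a(J')) = \sqrt{J'}$, and Lemma~\ref{first-lemma} collapses $\sqrt{J'}$ to $J'$, finishing the argument.

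First I would verify $V^a(J') = V(J)$ by two inclusions. For $V^a(J') \subseteq V(J)$: any $\vec a \in V^a(J')$ satisfies $a_i^q - a_i = 0$ for each $i$, and since $x^q-x$ factors as $\prod_{c \in F_q}(x-c)$ over $F_q^a$, its only roots are the $q$ elements of $F_q$; hence $\vec a \in F_q^n$. Since $J \subseteq J'$, $\vec a$ also zeroes every polynomial in $J$, so $\vec a \in V(J)$. For the reverse inclusion $V(J) \subseteq V^a(J')$: any $\vec a \in V(J) \subseteq F_q^n$ automatically satisfies $a_i^q - a_i = 0$ by Fermat's little theorem, and of course zeroes everything in $J$, so $\vec a$ zeroes every generator of $J'$, i.e.\ $\vec a \in V^a(J')$.

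Next, having established $V^a(J') = V(J)$, I would apply the classical Strong Nullstellensatz to the ideal $J'$ in $F_q[x_1,\ldots,x_n]$ to get
\[
I(V(J)) \;=\; I(V^a(J')) \;=\; \sqrt{J'}.
\]
Finally, Lemma~\ref{first-lemma} says $J'$ is radical, so $\sqrt{J'} = J' = J + \langle x_1^q-x_1,\ldots,x_n^q-x_n\rangle$, which is exactly the desired equality.

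The hard part has really been shifted into Lemma~\ref{first-lemma} (whose proof I am not re-deriving here); granting that, the remaining subtlety is the easy-to-overlook point that $V^a(J')$ does not pick up any ``extraneous'' points from the algebraic closure, which is what lets us identify the classical Nullstellensatz output with the affine variety $V(J) \subseteq F_q^n$. Everything else is a direct application of Theorem~\ref{sN}.
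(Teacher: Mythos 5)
Your proof is correct and follows essentially the same route as the paper: apply the classical Strong Nullstellensatz to $J' = J + \langle x_1^q-x_1,\ldots,x_n^q-x_n\rangle$, use Lemma~\ref{first-lemma} to collapse the radical, and identify $V^a(J')$ with $V(J)$. The only cosmetic difference is that you verify $V^a(J') = V(J)$ by direct element-chasing in both directions, whereas the paper reaches the same identity more compactly by noting $V^a(J') = V^a(J) \cap V^a(\langle x_1^q-x_1,\ldots,x_n^q-x_n\rangle) = V^a(J) \cap F_q^n = V(J)$.
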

\begin{proof}
Apply Theorem~\ref{sN} to $J+\langle x_1^q-x_1,...,x_n^q-x_n\rangle$ and use Lemma~\ref{first-lemma}. We have 
$I(V^a(J+\langle x_1^q-x_1,...,x_n^q-x_n\rangle))= J+\langle x_1^q-x_1,...,x_n^q-x_n\rangle
$.
But since $V^a(\langle x_1^q-x_1,...,x_n^q-x_n\rangle)=F_q^n$, it follows that $$V^a(J+\langle
x_1^q-x_1,...,x_n^q-x_n\rangle)=V^a(J)\cap F_q^n = V(J).$$ Thus we obtain $I(V(J))=J+\langle x_1^q-x_1,...,x_n^q-x_n\rangle.$\qed
\end{proof}

\section{Quantifier Elimination Using Gr\"obner Bases}

In this section, we show that the key step in quantifier elimination can be realized by Gr\"obner basis computation. Namely, for any formula $\varphi$ of the form $\exists \vec x \bigwedge_{i=1}^r f_i (\vec x, \vec y) =0$, we can compute a quantifier-free formula $\psi(\vec y)$ such that $\llbracket \varphi(\vec x;\vec y) \rrbracket = \llbracket \psi(\vec y) \rrbracket$. We use the following notational conventions:
\begin{itemize}
\item $|\vec x|=n$ is the number of quantified variables and $|\vec y|=m$ the number of free variables. We write $\vec x^q-\vec x =_{df} \{x_1^q-x_1,...,x_n^q-x_n\}$ and $\vec y^q-\vec y =_{df} \{y_1^q-y_1,...,y_m^q-y_m\}$, and call them {\em field polynomials} (following \cite{germ}).

\item We use $\vec a = (a_1,...,a_n)\in F_q^n$ to denote the assignment for the $\vec x$ variables, and $\vec b=(b_1,...,b_m) \in F_q^m$ for the $\vec y$ variables. $(\vec a,\vec b)\in F_q^{n+m}$ is a complete assignment for all the variables in $\varphi$.

\item When we write $J\subseteq F_q[\vec x, \vec y]$ or a formula $\varphi(\vec x; \vec y)$, we assume that all the $\vec x, \vec y$ variables do occur in $J$ or $\varphi$. We assume that the $\vec x$ variables always rank higher than the $\vec y$ variables in the lexicographic order. 
\end{itemize}

\subsection{Existential Quantification and Elimination Ideals}

First, we show that eliminating the $\vec x$ variables is equivalent to {\em projecting} the variety $V(\langle f_1,...,f_r\rangle)$ from $F_q^{n+m}$ to $F_q^{m}$. 

\begin{lemma}\label{formula-ideal}
For $f_1,...,f_r\in F_q[\vec x, \vec y]$, we have $\llbracket \bigwedge_{i=1}^r f_i = 0 \rrbracket = V(\langle f_1,...,f_r\rangle)$.
\end{lemma}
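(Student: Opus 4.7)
The plan is to prove the identity by a straightforward double inclusion, driving both sides down to the common intermediate expression $\bigcap_{i=1}^{r} V(\langle f_i\rangle)$ using the inductive definition of $\llbracket \cdot \rrbracket$ on the left and the definition of the affine variety of an ideal on the right.

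First I would use induction on $r$ (or simply the clause $\llbracket \psi_1\wedge\psi_2\rrbracket = \llbracket \psi_1\rrbracket\cap\llbracket\psi_2\rrbracket$ applied $r-1$ times) together with the base clause $\llbracket p=0\rrbracket = V(\langle p\rangle)$ to rewrite the left-hand side as
\[
\llbracket \textstyle\bigwedge_{i=1}^r f_i = 0\rrbracket = \bigcap_{i=1}^r \llbracket f_i=0\rrbracket = \bigcap_{i=1}^r V(\langle f_i\rangle).
\]
Then the task reduces to the purely algebraic identity $\bigcap_{i=1}^r V(\langle f_i\rangle) = V(\langle f_1,\dots,f_r\rangle)$ inside $F_q^{n+m}$.

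For the inclusion $V(\langle f_1,\dots,f_r\rangle) \subseteq \bigcap_i V(\langle f_i\rangle)$, observe that each generator $f_i$ lies in $\langle f_1,\dots,f_r\rangle$, so any point vanishing on the whole ideal vanishes on each $f_i$ in particular, hence lies in every $V(\langle f_i\rangle)$. For the reverse inclusion, take any $(\vec a,\vec b)\in \bigcap_i V(\langle f_i\rangle)$, so $f_i(\vec a,\vec b)=0$ for all $i$. An arbitrary element of $\langle f_1,\dots,f_r\rangle$ has the form $\sum_{i=1}^r g_i f_i$ with $g_i\in F_q[\vec x,\vec y]$, and this sum evaluates to $\sum_i g_i(\vec a,\vec b)\cdot 0 = 0$, so $(\vec a,\vec b)\in V(\langle f_1,\dots,f_r\rangle)$.

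There is no real obstacle here: the lemma is essentially a translation between the syntactic (formula) and algebraic (ideal) presentations of the same set of common zeros, and the only thing to verify carefully is that the single-polynomial clause $\llbracket p=0\rrbracket = V(\langle p\rangle)$ from the definition of realization composes correctly with intersection across a finite conjunction. Once that is spelled out, both inclusions are immediate from the definition of the ideal $\langle f_1,\dots,f_r\rangle$ as $F_q[\vec x,\vec y]$-linear combinations of the $f_i$.
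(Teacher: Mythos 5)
Your proof is correct and takes essentially the same route as the paper's: a direct double inclusion showing that a point lies in $\llbracket \bigwedge_{i=1}^r f_i = 0 \rrbracket$ iff it annihilates every generator iff it annihilates the whole ideal. You merely spell out the intermediate identity $\bigcap_i V(\langle f_i\rangle)=V(\langle f_1,\dots,f_r\rangle)$ and the linear-combination step that the paper's two-line proof leaves implicit.
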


\begin{definition}[Projection]
The $l$-th projection mapping is defined as: $$\pi_l: F_q^{N}\rightarrow F_q^{N-l},\pi_l((c_1,...,c_N))= ( c_{l+1},...,c_{N})$$ where $l<N$.
For any set $A\subseteq F_q^{N}$, we write $\pi_l(A)= \{\pi_i(\vec c): \vec c\in A\}\subseteq F_q^{N-l}.$
\end{definition}

\begin{lemma}\label{just-projection}
$\llbracket \exists \vec x \varphi (\vec x; \vec y)\rrbracket = \pi_n (\llbracket\varphi(\vec x; \vec y)\rrbracket)$.
\end{lemma}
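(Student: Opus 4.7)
The plan is to prove this by induction on $n = |\vec x|$, since the realization is defined inductively and the projection $\pi_n$ peels off $n$ coordinates, which we can match one at a time to a single existential quantifier.

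The base case $n = 0$ is immediate: the quantifier block $\exists \vec x$ is vacuous, $\pi_0$ is the identity map on $F_q^m$, and the statement reduces to $\llbracket \varphi(\vec y) \rrbracket = \llbracket \varphi(\vec y) \rrbracket$. For the inductive step, I would write $\exists \vec x \, \varphi = \exists x_1 (\exists x_2 \cdots \exists x_n \, \varphi)$. Applying the inductive hypothesis to the inner block $\exists x_2 \cdots \exists x_n$ (of length $n-1$) over the variables $(x_1, \vec y)$, I get $\llbracket \exists x_2 \cdots \exists x_n \, \varphi \rrbracket = \pi_{n-1}(\llbracket \varphi \rrbracket)$, where $\pi_{n-1}$ drops the last $n-1$ of the quantified coordinates (leaving $x_1$ and $\vec y$). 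Then applying the single-variable clause of the realization definition to $\exists x_1$ strips off the remaining $x_1$ coordinate, which is precisely one more application of a projection. Composing, $\pi_1 \circ \pi_{n-1} = \pi_n$ on $F_q^{n+m}$.

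To make the last step rigorous, I would unfold both sides by chasing an element $\vec b \in F_q^m$: $\vec b \in \llbracket \exists \vec x \, \varphi \rrbracket$ iff there exists $a_1 \in F_q$ with $(a_1, \vec b) \in \llbracket \exists x_2 \cdots \exists x_n \, \varphi \rrbracket$, iff (by IH) there exists $a_1 \in F_q$ and $(a_2, \ldots, a_n) \in F_q^{n-1}$ with $(a_1, a_2, \ldots, a_n, \vec b) \in \llbracket \varphi \rrbracket$, iff there exists $\vec a \in F_q^n$ with $(\vec a, \vec b) \in \llbracket \varphi \rrbracket$, which is exactly $\vec b \in \pi_n(\llbracket \varphi \rrbracket)$.

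There is no real obstacle here; the lemma is essentially a restatement of the inductive definition of $\llbracket \exists x \, \cdot \rrbracket$ in the language of projections. The only mild bookkeeping issue is keeping the indices of $\pi_l$ aligned with the convention that $\vec x$ variables rank higher (and hence appear in the first $n$ coordinates), so that peeling off existential quantifiers from the outside corresponds to dropping coordinates from the front; this matches the given definition of $\pi_l$ exactly.
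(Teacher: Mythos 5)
Your proof is correct and amounts to the same thing the paper does: an element-chase showing $\vec b \in \llbracket \exists \vec x\, \varphi\rrbracket$ iff some $\vec a \in F_q^n$ has $(\vec a,\vec b) \in \llbracket\varphi\rrbracket$, which is by definition membership in $\pi_n(\llbracket\varphi\rrbracket)$. The only difference is that you wrap this in an explicit induction on $n$ to reconcile the block $\exists\vec x$ with the one-variable-at-a-time clause of the realization definition, whereas the paper simply applies "by definition" to the block directly; your version is a bit more careful but not a different argument.
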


Next, we show that the projection $\pi_n$ of the variety $V_{n+m}(\langle f_1,...,f_r\rangle)$ from $F_q^{n+m}$ to $F_q^m$,  is exactly the variety $V_m(\langle f_1,...,f_r\rangle\cap F_q[\vec y])$. 

\begin{definition}[Elimination Ideal \cite{ideals}]
Let $J\subseteq F_q[x_1,...,x_n]$ be an ideal. The l-th {\bf elimination ideal} $J_l$, for $1\leq l \leq N$, is the ideal of $F_q[x_{l+1},...,x_N]$ defined by $J_l = J\cap F_q[x_{l+1},...,x_N].$
\end{definition}

The following lemma shows that adding field polynomials does not change the realization. For $f_1,...,f_r\in F_q[\vec x, \vec y]$, we have:
\begin{lemma} \label{extend-formulas}
$\llbracket\bigwedge_{i=1}^r f_i = 0\rrbracket= \llbracket \bigwedge_{i=1}^r f_i=0 \wedge \bigwedge (x_i^q-x_i=0)\wedge \bigwedge (y_i^q-y_i=0) \rrbracket.$
\end{lemma}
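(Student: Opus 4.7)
The plan is to reduce the two realizations to affine varieties and then invoke Fermat's little theorem to show the field polynomials impose no constraint on points of $F_q^{n+m}$. Concretely, I would first apply Lemma~\ref{formula-ideal} to rewrite the left-hand side as $V(\langle f_1,\ldots,f_r\rangle) \subseteq F_q^{n+m}$ and the right-hand side as
\[
V\bigl(\langle f_1,\ldots,f_r,\; x_1^q-x_1,\ldots,x_n^q-x_n,\; y_1^q-y_1,\ldots,y_m^q-y_m\rangle\bigr).
\]
Since the variety of a sum of generators is the intersection of the individual varieties, the right-hand side equals $V(\langle f_1,\ldots,f_r\rangle) \cap V(\langle \vec x^q-\vec x,\vec y^q-\vec y\rangle)$.

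Next I would invoke Fermat's little theorem (the proposition stated just after the definition of realization), which gives $V(x_i^q-x_i)=F_q$ for each variable. Taking the intersection across all $n+m$ variables yields $V(\langle \vec x^q-\vec x,\vec y^q-\vec y\rangle)=F_q^{n+m}$. Thus the right-hand side simplifies to $V(\langle f_1,\ldots,f_r\rangle)\cap F_q^{n+m}$, which, because varieties are by definition taken inside $F_q^{n+m}$, is just $V(\langle f_1,\ldots,f_r\rangle)$ itself. This matches the left-hand side and completes the argument.

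There is essentially no obstacle here; the only thing to be careful about is bookkeeping: one must observe that the realization bracket $\llbracket\cdot\rrbracket$ is defined relative to $F_q^{n+m}$ (not to an algebraic closure), so adding constraints that vanish identically on $F_q^{n+m}$ preserves the solution set. Once that observation is made explicit, the proof is a one-line consequence of Fermat's little theorem and Lemma~\ref{formula-ideal}.
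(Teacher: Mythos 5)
Your proof is correct and takes essentially the same approach as the paper: both arguments reduce to the observation, via Fermat's Little Theorem, that the field polynomials vanish identically on $F_q^{n+m}$, so conjoining them leaves the realization unchanged. You route the reasoning through Lemma~\ref{formula-ideal} and the fact that the variety of an ideal sum is the intersection of varieties, while the paper argues directly at the level of $\llbracket\cdot\rrbracket$ semantics (the added conjunct has realization $\llbracket\top\rrbracket$), but this is only a bookkeeping difference.
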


Now we can prove the key equivalence between projection operations and elimination ideals. This requires the use of Nullstellensatz for finite fields.

\begin{theorem}\label{key-lemma}
Let $J\subseteq F_q[\vec x, \vec y]$ be an ideal which contains the field polynomials for all the variables in $J$. We have $\pi_n(V(J)) = V(J_n).$
\end{theorem}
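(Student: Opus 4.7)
The plan is to prove both inclusions, with the forward inclusion $\pi_n(V(J)) \subseteq V(J_n)$ being essentially immediate and the reverse inclusion $V(J_n) \subseteq \pi_n(V(J))$ requiring the finite-field Nullstellensatz (Theorem~\ref{null}) in an essential way. For the easy direction, if $(\vec a,\vec b)\in V(J)$ and $g\in J_n = J\cap F_q[\vec y]$, then since $g$ depends only on $\vec y$ we have $g(\vec b)=g(\vec a,\vec b)=0$, so $\pi_n(\vec a,\vec b)=\vec b\in V(J_n)$.

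For the reverse inclusion I would first observe that every subset of $F_q^m$ is a variety: each singleton $\{\vec b\}$ equals $V(\langle y_1-b_1,\ldots,y_m-b_m\rangle)$ and finite unions of varieties are varieties. Hence $W := \pi_n(V(J))$ is itself a variety in $F_q^m$. By Theorem~\ref{null}, the operators $V$ and $I$ restrict to a bijection between varieties in $F_q^m$ and ideals of $F_q[\vec y]$ that contain the field polynomials $\vec y^q-\vec y$. It therefore suffices to show $I(W) = I(V(J_n))$, since applying $V$ to both sides then gives $W = V(J_n)$.

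The right-hand side is easy to identify: because $J$ contains each $y_j^q-y_j$ and these lie in $F_q[\vec y]$, the elimination ideal $J_n$ also contains them, so Theorem~\ref{null} applied inside $F_q[\vec y]$ gives $I(V(J_n)) = J_n$. The task thus reduces to proving $I(W) = J_n$. The inclusion $J_n \subseteq I(W)$ is just the easy direction of the theorem. For $I(W)\subseteq J_n$, take $g\in F_q[\vec y]$ vanishing on $W$; viewed in $F_q[\vec x,\vec y]$, such a $g$ vanishes on every point of $V(J)$, so $g\in I(V(J))$.

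At this point I would invoke Theorem~\ref{null} on $J$ itself: since $J$ already absorbs all field polynomials $\vec x^q-\vec x$ and $\vec y^q-\vec y$ by hypothesis, $I(V(J)) = J + \langle \vec x^q-\vec x, \vec y^q-\vec y\rangle = J$, so $g\in J\cap F_q[\vec y] = J_n$. The main subtlety — and the place where the standing hypothesis on $J$ pays off — is precisely this last step: one must have $J$ already contain every field polynomial in order to conclude $I(V(J)) = J$ on the nose, turning the geometric vanishing condition on $W$ directly into membership in the elimination ideal.
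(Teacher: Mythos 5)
Your proof is correct, and it takes a genuinely different route from the paper's. The paper argues the reverse inclusion by contrapositive: given $\vec b\notin\pi_n(V(J))$, it constructs the explicit indicator polynomial $f_{\vec b}=\prod_{i=1}^{m}\prod_{c\in F_q\setminus\{b_i\}}(y_i-c)$, which vanishes on all of $V(J)$ but not at $\vec b$; then $f_{\vec b}\in I(V(J))=J$ by Theorem~\ref{null} and the hypothesis on $J$, so $f_{\vec b}\in J\cap F_q[\vec y]=J_n$ and $\vec b\notin V(J_n)$. You instead work at the level of ideals: observing that every subset of $F_q^m$ is a variety, so that $W=\pi_n(V(J))$ and $V(J_n)$ are both fixed by $V\circ I$, you reduce the set equality to the ideal equality $I(W)=J_n$, verify $I(V(J_n))=J_n$ via Theorem~\ref{null} applied in $F_q[\vec y]$, and then establish $I(W)\subseteq J_n$ through the same pivotal step $I(V(J))=J$ that the paper uses. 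Both arguments hinge on the finite-field Nullstellensatz applied to $J$; yours is more conceptual and symmetric (it also invokes the Nullstellensatz for $J_n$), framing the result as a consequence of the ideal--variety Galois duality over $F_q$, whereas the paper's is more elementary and pointwise. It is worth noting that your premise ``every subset of $F_q^m$ is a variety'' is itself established by exhibiting vanishing polynomials for singletons, so the interpolation idea behind the paper's $f_{\vec b}$ is still present underneath your abstraction --- but the repackaging is clean and the reasoning sound.
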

\begin{proof} We show inclusion in both directions.
\begin{itemize}
\item $\pi_n(V(J)) \subseteq V(J_n):$ 

For any $\vec b\in \pi_n(V(J))$, there exists $\vec a\in F_q^n$ such that $(\vec a, \vec b)\in V(J)$. That is, $(\vec a, \vec b)$ satisfies all polynomials in $J$; in particular, $\vec b$ satisfies all polynomials in $J$ that only contain the $\vec y$ variables ($\vec a$ is not assigned to variables). Thus, $\vec b \in V(J\cap F_q[\vec y]) = V(J_n).$

\item $V(J_n)\subseteq \pi_n(V(J)):$ 

Let $\vec b$ be a point in $F_q^m$ such that $\vec b\not\in \pi_n(V(J))$. Consider the polynomial $$f_{\vec b} = \prod_{i=1}^{m} (\prod_{c\in F_q\setminus\{b_i\}} (y_i - c)).$$

$f_{\vec b}$ vanishes on all the points in $F_q^n$, except $\vec b = ( b_1,...,b_m)$, since $(y_i-b_i)$ is excluded in the product for all $i$. In particular, $f_{\vec b}$ vanishes on all the points in $V(J)$, because for each $(\vec a, \vec b')\in V(J)$, $\vec b'$ must be different from $\vec b$, and $f_{\vec b}(\vec a, \vec b') = f_{\vec b}(\vec b') = 0$ (since there are no $\vec x$ variables). Therefore, $f_{\vec b}$ is contained in the vanishing ideal of $V(J)$, i.e., $f_{\vec b}\in I(V(J))$.

Now, Theorem \ref{null} shows $I(V(J))= J+\langle \vec x^q-\vec x, \vec y^q-\vec y\rangle$.
Since $J$ already contains the field polynomials, we know $J+\langle \vec x^q-\vec x, \vec y^q-\vec y\rangle = J$, and consequently $I(V(J))= J.$ Since $f_{\vec b}\in I(V(J))$, we must have $f_{\vec b}\in J$. But on the other hand, $f_{\vec b}\in F_q[\vec y]$. Hence $f_{\vec b}\in J\cap F_q[\vec y] = J_n$. But since $f_{\vec b}(\vec b)\neq 0$,  we know $\vec b\not\in V(J_n)$.\qed
\end{itemize}
\end{proof}

\subsection{Quantifier Elimination using Elimination Ideals}

Theorem \ref{key-lemma} shows that to obtain the projection of a variety over $F_q$, we only need to take the variety of the corresponding elimination ideal. In fact, this can be easily done using the Gr\"obner basis of the original ideal:

\begin{proposition}[cf. \cite{ideals}]\label{GB}
Let $J\subseteq F_q[x_1,...,x_N]$ be an ideal and let $G$ be the Gr\"obner basis of $J$ with respect to the lexicographic order $x_1\succ\cdots\succ x_N$. Then for every $1\leq l\leq N$, $G\cap F_q[x_{l+1},...,x_N]$ is a Gr\"obner basis of the $l$-th elimination ideal $J_l$. That is, $J_l=\langle G\rangle \cap F_q[x_{l+1},...,x_N]=\langle G\cap F_q[x_{l+1},...,x_N]\rangle.$
\end{proposition}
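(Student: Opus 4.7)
The plan is to prove the standard Elimination Theorem as stated in Cox--Little--O'Shea, specialized here to $F_q$. Set $G_l = G \cap F_q[x_{l+1},\dots,x_N]$. The inclusion $\langle G_l\rangle \subseteq J_l$ is immediate, since every $g \in G_l$ lies in $J$ and uses only the variables $x_{l+1},\dots,x_N$. So the real work is to show the reverse inclusion, and in fact to show the stronger statement that $G_l$ is a Gr\"obner basis of $J_l$ with respect to the induced lex order on $F_q[x_{l+1},\dots,x_N]$; this will then yield $\langle G_l\rangle = J_l$ automatically, because a Gr\"obner basis of an ideal generates that ideal.

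To prove the Gr\"obner basis property, I would use the definition directly: show $\langle LM(J_l)\rangle = \langle LM(G_l)\rangle$ as ideals in $F_q[x_{l+1},\dots,x_N]$. The nontrivial direction is $\langle LM(J_l)\rangle \subseteq \langle LM(G_l)\rangle$. Pick any nonzero $f \in J_l$. Since $f \in J$ and $G$ is a Gr\"obner basis of $J$, there is some $g \in G$ with $LM(g) \mid LM(f)$.

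The pivotal step is the following observation about lex order. Since $f \in F_q[x_{l+1},\dots,x_N]$, $LM(f)$ involves none of the variables $x_1,\dots,x_l$. Because $LM(g)$ divides $LM(f)$, $LM(g)$ also involves none of $x_1,\dots,x_l$. Now the defining feature of lex with $x_1 \succ \cdots \succ x_N$ kicks in: if $g$ contained any term involving one of $x_1,\dots,x_l$, that term would dominate every term only in $x_{l+1},\dots,x_N$, forcing $LM(g)$ to involve some $x_i$ with $i \le l$, a contradiction. Hence $g \in F_q[x_{l+1},\dots,x_N]$, so $g \in G_l$, and therefore $LM(f) \in \langle LM(G_l)\rangle$. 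This is the only delicate point in the argument; once it is in hand, we conclude $\langle LM(J_l)\rangle \subseteq \langle LM(G_l)\rangle$, and the opposite inclusion holds since $G_l \subseteq J_l$.

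Having established that $G_l$ is a Gr\"obner basis of $J_l$, we get $\langle G_l\rangle = J_l$, which is exactly the equality $J_l = \langle G\rangle \cap F_q[x_{l+1},\dots,x_N] = \langle G \cap F_q[x_{l+1},\dots,x_N]\rangle$ in the statement. The main obstacle, as indicated, is the lex-order argument showing that a polynomial in $G$ whose leading monomial avoids $x_1,\dots,x_l$ must itself avoid those variables entirely; nothing in the argument is specific to $F_q$, so the finite-field setting introduces no extra complication here.
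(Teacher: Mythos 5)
Your proof is correct and is precisely the standard proof of the Elimination Theorem from Cox--Little--O'Shea; the paper does not reproduce a proof of this proposition but simply cites \cite{ideals}, so there is no in-paper argument to compare against. The lex-order observation you flag as pivotal --- that if $\mathrm{LM}(g)$ avoids $x_1,\dots,x_l$ then so does all of $g$, because any monomial touching an eliminated variable would lex-dominate $\mathrm{LM}(g)$ --- is indeed the crux, and your handling of it, together with the closing appeal to the fact that a Gr\"obner basis generates its ideal, is exactly right and fully general (nothing here uses finiteness of $F_q$, as you note).
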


Now, putting all the lemmas together, we arrive at the following theorem:

\begin{theorem}\label{main-theorem}
Let $\varphi(\vec x;\vec y)$ be $\exists \vec x.(\bigwedge_{i=1}^r f_i=0)$ be a formula in $\mathcal{L}_q$, with $f_i\in F_q[\vec x, \vec y]$. Let $G$ be the Gr\"obner basis of $\langle f_1,...,f_r, \vec x^q-\vec x, \vec y^q-\vec y \rangle$. Suppose $G\cap F_q[\vec y] = \{g_1,...,g_s\},$ then we have $\llbracket \varphi \rrbracket = \llbracket \bigwedge_{i=1}^s (g_i=0)\rrbracket.$
\end{theorem}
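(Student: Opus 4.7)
The plan is to chain together every lemma, theorem, and proposition established in Section~3 in a single computation that rewrites $\llbracket\varphi\rrbracket$ step by step until it matches $\llbracket\bigwedge_{i=1}^s g_i=0\rrbracket$. Since all the heavy lifting (the Nullstellensatz in the finite-field form, the projection/elimination-ideal correspondence, and the Gr\"obner-basis elimination property) is already done, this is essentially a bookkeeping proof; I expect no real obstacle beyond making sure the hypothesis ``$J$ contains the field polynomials of all its variables'' is in force when Theorem~\ref{key-lemma} is invoked.

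First I would apply Lemma~\ref{just-projection} to move the existential quantifier out as a projection, obtaining $\llbracket\varphi\rrbracket = \pi_n\bigl(\llbracket\bigwedge_{i=1}^r f_i=0\rrbracket\bigr)$. Next, I would use Lemma~\ref{extend-formulas} to enlarge the conjunction by the field polynomials $x_j^q-x_j$ and $y_k^q-y_k$ without changing the realization, and then apply Lemma~\ref{formula-ideal} to rewrite that realization as the variety $V(J)$ of the ideal $J=\langle f_1,\dots,f_r,\vec x^{\,q}-\vec x,\vec y^{\,q}-\vec y\rangle$. By construction $J$ contains the field polynomials of all its variables, so the hypothesis of Theorem~\ref{key-lemma} is satisfied and $\pi_n(V(J))=V(J_n)$, where $J_n=J\cap F_q[\vec y]$ is the $n$-th elimination ideal.

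It remains to identify $V(J_n)$ with $\llbracket\bigwedge_{i=1}^s g_i=0\rrbracket$. Here I would invoke Proposition~\ref{GB}: because the Gr\"obner basis $G$ of $J$ is computed with respect to the lexicographic order in which the $\vec x$ variables rank higher than the $\vec y$ variables (as stipulated in the notational conventions of Section~3), $G\cap F_q[\vec y]=\{g_1,\dots,g_s\}$ is a Gr\"obner basis of $J_n$, so $J_n=\langle g_1,\dots,g_s\rangle$ and $V(J_n)=V(\langle g_1,\dots,g_s\rangle)$. Applying Lemma~\ref{formula-ideal} once more in the reverse direction turns this variety back into $\llbracket\bigwedge_{i=1}^s g_i=0\rrbracket$, closing the chain of equalities.

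The only point worth pausing at is the lexicographic-order assumption needed for Proposition~\ref{GB}; if that were not already fixed by the conventions at the start of Section~3 one would have to state it explicitly in the theorem. Beyond this, the argument is a direct concatenation of previously proved equivalences, which is exactly why Theorem~\ref{key-lemma} was isolated as the technical core of the section.
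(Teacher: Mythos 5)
Your proof is correct and follows essentially the same route as the paper's: both chain together Lemma~\ref{just-projection}, Lemma~\ref{extend-formulas}, Lemma~\ref{formula-ideal}, Theorem~\ref{key-lemma}, and Proposition~\ref{GB} in a direct computation, differing only in the trivial order in which the first two reductions are applied. Your remark about checking the field-polynomial hypothesis for Theorem~\ref{key-lemma} and the lexicographic-order assumption for Proposition~\ref{GB} matches the care the paper itself takes.
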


\begin{proof}
We write $J=\langle f_1,...,f_r,\vec x^q-\vec x, \vec y^q-\vec y\rangle$ for convenience. First, by Lemma \ref{extend-formulas}, adding the polynomials $\vec x^q-\vec x$ and $\vec y^q -\vec y$ does not change the realization:
{$$\llbracket \varphi\rrbracket = \llbracket\exists \vec x.(\bigwedge_{i=1}^r f_i=0)\rrbracket = \llbracket\exists \vec x.(\bigwedge_{i=1}^r f_i=0\wedge \bigwedge_{i=1}^n( x_i^q- x_i = 0) \wedge \bigwedge_{i=1}^m (y_i^q- y_i=0))\rrbracket $$}Next, by Lemma \ref{just-projection}, the quantification on $\vec x$ corresponds to projecting a variety:
{$$\llbracket\exists \vec x.(\bigwedge_{i=1}^r f_i=0\wedge \bigwedge_{i=1}^n( x_i^q- x_i = 0) \wedge \bigwedge_{i=1}^m (y_i^q- y_i=0))\rrbracket= \pi_n(V(J)).$$}Using Theorem \ref{key-lemma}, we know that the projection of a variety is equivalent to the variety of the corresponding elimination ideal, i.e., $\pi_n(V(J)) =V(J\cap F_q[\vec y])$. Now, using the property of Gr\"obner bases in Proposition \ref{GB}, we know the elimination ideal $\langle G\rangle\cap F_q[\vec y]$ is generated by $G\cap F_q[\vec y]$:
{$$V(J\cap F_q[\vec y])=V(\langle G\rangle \cap F_q[\vec y]) = V(\langle G\cap F_q[\vec y] \rangle)= V(\langle g_1,...,g_s\rangle)$$}Finally, by Lemma \ref{formula-ideal}, an ideal is equivalent to the conjunction of atomic formulas given by the generators of the ideal: $V(\langle g_1,...,g_s\rangle)= \llbracket \bigwedge_{i=1}^s g_i=0\rrbracket.$

Connecting all the equations above, we have shown $\llbracket \varphi \rrbracket = \llbracket \bigwedge_{i=1}^s g_i=0\rrbracket.$ Note that $g_1,...,g_s\in F_q[\vec y]$ (they do not contain $\vec x$ variables).\qed
\end{proof}

\section{Formula Flattening with Ideal Operations}

If negations on atomic formulas can be eliminated (to be shown in Lemma \ref{no-negation}), Theorem \ref{main-theorem} already gives a direct quantifier elimination algorithm. That is, we can always use duality to make the innermost quantifier block an existential one, and expand the quantifier-free part to DNF. Then the existential block can be distributed over the disjuncts and Theorem \ref{main-theorem} is applied. However, this direct algorithm {\em always} requires exponential blow-up in expanding formulas into DNF. 

We show that the DNF-expansion can be avoided: Any quantifier-free formula can be transformed into an equivalent formula of the form $\exists \vec z.(\bigwedge_{i=1}^r f_i=0)$, where $\vec z$ are new variables and $f_i$s are polynomials. The key is that Boolean conjunctions and disjunctions can both be turned into additions of ideals; in the latter case new variables need be introduced. This transformation can be done in linear time and space, and is a generalization of the {\em Tseitin transformation} from $F_2$ to general finite fields. 

We use the usual definition of ideal addition and multiplication. Let $J_1=\langle f_1,...,f_r\rangle$ and $J_2=\langle g_1,...,g_s\rangle$ be ideals, and $h$ be a polynomial. Then $J_1+J_2=\langle f_1,...,f_, g_1,...,g_s\rangle$ and $J_1\cdot h = \langle f_1\cdot h,...,f_r\cdot h\rangle$.

\begin{lemma}[Elimination of Negations]
Suppose $\varphi$ is a quantifier free formula in $\mathcal{L}_q$ in NNF and contains $k$ negative atomic formulas. Then there is a formula $\exists \vec z.\psi$, where $\psi$ contains new variables $\vec z$ but no negative atoms, such that $\llbracket \varphi \rrbracket = \llbracket \exists \vec z.\psi \rrbracket $.
\label{no-negation}\end{lemma}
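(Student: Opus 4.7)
The plan is to exploit the fact that in the field $F_q$ a nonzero element is invertible: for any $a \in F_q$, $a \neq 0$ iff there exists $b \in F_q$ with $a \cdot b = 1$. Applied to the evaluation of a polynomial $f \in F_q[\vec y]$ at a point $\vec c$, this gives
\[
f(\vec c) \neq 0 \iff \exists b \in F_q.\ f(\vec c) \cdot b - 1 = 0,
\]
so each negative atomic formula $\neg(f = 0)$ is equivalent to $\exists z.\,(f \cdot z - 1 = 0)$ over a fresh variable $z$.

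First I would walk over $\varphi$ and, at each of the $k$ occurrences of a negative atom $\neg(f_i = 0)$, introduce a distinct fresh variable $z_i$ and locally replace that occurrence by the positive atom $f_i \cdot z_i - 1 = 0$. Call the resulting quantifier-free formula $\psi_0$; I claim $\llbracket \varphi \rrbracket = \llbracket \exists z_1 \cdots \exists z_k.\, \psi_0 \rrbracket$, so that setting $\vec z = (z_1, \ldots, z_k)$ and $\psi = \psi_0$ suffices. The equivalence is then established by a straightforward induction on the NNF structure of $\varphi$: the atomic base cases are either trivial (positive atom) or a direct application of the identity above (negative atom), and for the connective cases, if the two subformulas $\varphi_1, \varphi_2$ translate to $\exists \vec z_1.\, \psi_1$ and $\exists \vec z_2.\, \psi_2$ with pairwise-disjoint fresh tuples, then both
\[
(\exists \vec z_1.\, \psi_1) \star (\exists \vec z_2.\, \psi_2) \equiv \exists \vec z_1\, \vec z_2.\,(\psi_1 \star \psi_2), \quad \star \in \{\wedge, \vee\},
\]
hold because $\vec z_1 \notin \mathrm{free}(\psi_2)$ and $\vec z_2 \notin \mathrm{free}(\psi_1)$. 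Being in NNF guarantees that negation never sits above a connective, so $\wedge$ and $\vee$ are the only inductive cases that arise.

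The main point of care, and the only place where the argument could be mishandled, is the freshness discipline: each \emph{occurrence} of a negative atom must receive its own $z_i$, even when two occurrences mention the same polynomial $f$. This is what makes the $\vec z$-tuples of sibling subformulas disjoint and thereby validates the outermost prenexing step above. With this convention in place the entire translation is linear in $|\varphi|$, which is the bound needed for the complexity discussion in the next section.
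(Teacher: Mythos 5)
Your proposal is correct and follows essentially the same route as the paper: replace each negated atom $f\neq 0$ by a positive atom $f\cdot z - 1 = 0$ with a fresh $z$ (the Rabinowitsch trick, relying on invertibility of nonzero field elements), and then pull the existential quantifiers to the front. The paper states the prenexing step without elaboration; you spell it out via the disjoint-freshness argument and an induction over the NNF structure, which is a harmless refinement rather than a different approach.
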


\begin{lemma}[Elimination of Disjunctions]\label{no-disjunction}
Suppose $\psi_1$ and $\psi_2$ are two formulas in variables $x_1,...,x_n$, and $J_1$ and $J_2$ are ideals in $F_q[x_1,...,x_n]$ satisfying $\llbracket \psi_1\rrbracket = V(J_1)$ and $\llbracket \psi_2 \rrbracket = V(J_2)$. Then, using $x_0$ as a new variable, we have $\llbracket \psi_1\vee \psi_2 \rrbracket = V(J_1)\cup V(J_2) =\pi_0( V(x_0 J_1+(1-x_0)J_2)).$
\end{lemma}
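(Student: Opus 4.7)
The first equality $\llbracket \psi_1 \vee \psi_2 \rrbracket = V(J_1) \cup V(J_2)$ is immediate from the realization rules (since $\psi_1 \vee \psi_2$ unfolds as $\neg(\neg\psi_1 \wedge \neg\psi_2)$, and realization turns $\wedge$ into $\cap$ and $\neg$ into complement) combined with the hypotheses $\llbracket \psi_i \rrbracket = V(J_i)$. The real content of the lemma is the second equality, which I would prove by showing inclusion in both directions. The guiding intuition is that the new variable $x_0$ plays the role of a ``selector'': setting $x_0 = 1$ annihilates every generator $(1-x_0) g_j$ and leaves only the $J_1$-constraints, while $x_0 = 0$ annihilates every $x_0 f_i$ and leaves only the $J_2$-constraints.

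For the forward inclusion, I would show that any $\vec a \in V(J_1) \cup V(J_2)$ lifts to a point of $V(x_0 J_1 + (1-x_0) J_2)$ whose projection recovers $\vec a$. Every element of the sum ideal can be written as $x_0 p + (1-x_0) q$ with $p$ in the extension of $J_1$ to $F_q[x_0, \vec x]$ and $q$ in the extension of $J_2$, since $x_0 J_1$ and $(1-x_0) J_2$ are themselves ideals in the larger ring. If $\vec a \in V(J_1)$, evaluating at $(1, \vec a)$ yields $1 \cdot p(1, \vec a) + 0 = 0$ because $p$ is a polynomial combination of the $f_i$s, all of which vanish at $\vec a$; symmetrically, if $\vec a \in V(J_2)$ then $(0, \vec a)$ lies in the variety. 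In both cases the projection of the lifted point is $\vec a$.

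For the reverse inclusion, suppose $(c, \vec a) \in V(x_0 J_1 + (1-x_0) J_2)$. Plugging this point into each generator $x_0 f_i$ and $(1-x_0) g_j$ gives $c \cdot f_i(\vec a) = 0$ for every generator $f_i$ of $J_1$ and $(1-c) \cdot g_j(\vec a) = 0$ for every generator $g_j$ of $J_2$. Now I would split on the value of $c \in F_q$: if $c \neq 0$, cancellation gives $f_i(\vec a) = 0$ for all $i$, so $\vec a \in V(J_1)$; if $c = 0$, then $1 - c = 1 \neq 0$ forces $g_j(\vec a) = 0$ for all $j$, so $\vec a \in V(J_2)$. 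Either way $\vec a \in V(J_1) \cup V(J_2)$. The only subtlety worth flagging is that, unlike in the $F_2$-Tseitin setting, $x_0$ is free to take values other than $0$ and $1$; the case split above handles this uniformly, since any nonzero $c$ already forces the $J_1$-conditions, so the argument generalizes cleanly from $F_2$ to any finite field, which is exactly what the paper needs.
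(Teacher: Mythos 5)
Your proof is correct and follows essentially the same approach as the paper's, lifting to $(1,\vec a)$ or $(0,\vec a)$ for the forward inclusion and splitting on the value of the selector variable for the reverse. The only cosmetic difference is that you split on $c=0$ versus $c\neq 0$ (using that $F_q$ has no zero divisors), whereas the paper uses a ternary split on $c=0$, $c=1$, and $c\notin\{0,1\}$; both are sound and yield the same conclusion.
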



\begin{theorem}
For any quantifier-free formula $\varphi(\vec x)$ given in NNF, there exists a formula $\psi$ of the form $\exists \vec u, \vec v (\bigwedge_i( f_i(\vec x, \vec u, \vec v)=0))$ such that $\llbracket \varphi \rrbracket = \llbracket \psi \rrbracket$. Furthermore, $\psi$ can be generated in time $O(|\varphi|)$, and also $|\vec u|+|\vec v|=O(|\varphi|)$.
\label{flatten-theorem}
\end{theorem}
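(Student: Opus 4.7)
The plan is to proceed by structural induction on $\varphi$: first invoke Lemma~\ref{no-negation} to dispose of negations, then repeatedly invoke Lemma~\ref{no-disjunction} to flatten disjunctions, while propagating all existentials to the outside.

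First I would apply Lemma~\ref{no-negation} to obtain an equivalent formula $\exists \vec u.\,\varphi_0$ in which $\varphi_0$ is quantifier-free, in NNF, and contains no negative atoms, i.e.\ is built from polynomial-equation atoms using only $\wedge$ and $\vee$. The lemma introduces one fresh variable per negative atom, so $|\vec u| = O(|\varphi|)$, and by inspection the conversion runs in linear time.

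Next I would flatten $\varphi_0$ by structural induction, maintaining for each subformula $\varphi'$ a set of fresh existential variables $\vec v_{\varphi'}$ and an ideal $J_{\varphi'} = \langle f^{\varphi'}_1, \dots, f^{\varphi'}_{k_{\varphi'}}\rangle$ such that $\llbracket \varphi'\rrbracket$ equals the projection of $V(J_{\varphi'})$ away from $\vec v_{\varphi'}$. For an atom $f=0$, take $\vec v_{\varphi'}=\emptyset$ and $J_{\varphi'}=\langle f\rangle$. For a conjunction $\varphi_1\wedge\varphi_2$, take disjoint copies of the children's variables and set $J_{\varphi'}=J_{\varphi_1}+J_{\varphi_2}$; no new variables are needed. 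For a disjunction $\varphi_1\vee\varphi_2$, introduce one fresh variable $z$ and, invoking Lemma~\ref{no-disjunction}, set $J_{\varphi'}=zJ_{\varphi_1}+(1-z)J_{\varphi_2}$ with $\vec v_{\varphi'}=\{z\}\cup\vec v_{\varphi_1}\cup\vec v_{\varphi_2}$. Finally, pulling all existentials outside yields $\psi=\exists\vec u,\vec v.\,\bigwedge_i f_i=0$ as required. The variable count is immediate: Step~1 contributes at most one variable per negation and Step~2 contributes exactly one per disjunction, so $|\vec u|+|\vec v|=O(|\varphi|)$; and each recursive step does constant symbolic work beyond its children, so the whole transformation is $O(|\varphi|)$-time.

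The main obstacle is justifying the disjunction case cleanly. Lemma~\ref{no-disjunction} is stated for quantifier-free matrices whose realizations are already varieties, but the inductive children are existentially quantified formulas. The fix is to notice that since $\vec v_{\varphi_1}$ and $\vec v_{\varphi_2}$ are disjoint and each $J_{\varphi_i}$ involves only its own fresh variables, the two matrices live in a common polynomial ring in which $\llbracket(\bigwedge_i f^{\varphi_1}_i=0)\vee(\bigwedge_j f^{\varphi_2}_j=0)\rrbracket=V(J_{\varphi_1})\cup V(J_{\varphi_2})$; Lemma~\ref{no-disjunction} then applies directly to express this union as a single projected variety, and the outer existentials can be pulled to the front because $\exists$ distributes over $\vee$. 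A related worry is that the generators $zf^{\varphi_1}_i$ grow under nested disjunctions, but since the theorem bounds only the number of fresh variables and the number of symbolic operations (not the fully expanded polynomial length), this does not obstruct the claimed linear bounds.
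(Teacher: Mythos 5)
Your proposal is correct and follows essentially the same route as the paper: eliminate negations via Lemma~\ref{no-negation}, then build $J_{\varphi'}$ by structural induction using ideal addition for conjunctions and the $z J_1 + (1-z) J_2$ construction of Lemma~\ref{no-disjunction} for disjunctions, with linear variable and time bounds. You are in fact somewhat more careful than the paper at the disjunction step, where you explicitly justify applying Lemma~\ref{no-disjunction} to subformulas whose realizations are projections of varieties rather than varieties themselves (by observing the fresh variables are disjoint and $\exists$ distributes over $\vee$), a point the paper passes over without comment.
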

\begin{proof}
Since $\varphi(\vec x)$ is in NNF, all the negations occur in front of atomic formulas. We first use Lemma \ref{no-negation} to eliminate the negations. Suppose there are $k$ negative atomic formulas in $\varphi$, we obtain $\llbracket \varphi \rrbracket= \llbracket\exists u_1,...,u_k.\varphi'\rrbracket$. Now $\varphi'$ does not contain negations.

We then prove that there exists an ideal $J_{\varphi'}$ for $\varphi'$ satisfying $\pi_{|\vec v|}(V(J_{\varphi'}))=\llbracket \varphi'\rrbracket$, where $\vec v$ are the introduced variables (which rank higher than the existing variables in the variable ordering, so that the projection $\pi_{|\vec v|}$ truncates assignments on the $\vec v$ variables).
\begin{itemize}
\item If $\varphi'$ is an atomic formula $f=0$, then $J_{\varphi'}=\langle f\rangle$;

\item If $\varphi'$ is of the form $\theta_1\wedge \theta_2$, then $J_{\varphi'}= J_{\theta_1}+J_{\theta_2}$;

\item If $\varphi'$ is of the form $\theta_1\vee \theta_2$, then $J_{\varphi'}= v_i\cdot J_{\theta_1}+(1-v_i)\cdot J_{\theta_2}$, where $v_i$ is new.
\end{itemize}
Note that the new variables are only introduced in the disjunction case, and therefore the number of $\vec v$ variables equals the number of disjunctions. Following Lemma \ref{formula-ideal} and \ref{no-disjunction}, the transformation preserves the realization of the formula in each case. Hence, we have $\pi_{\vec v}(V(J_{\varphi'}))=\llbracket \varphi'\rrbracket$. Writing $J_{\varphi'}=\langle f_1,...,f_r\rangle$, we know $\llbracket\varphi\rrbracket = \llbracket\exists \vec u.\varphi'\rrbracket=\llbracket\exists \vec u\exists\vec v.\bigwedge_{i=1}^r f_i\rrbracket.$ Notice that the number of rewriting steps is bounded by the number of logical symbols appearing in $\varphi$. Hence the transformation is done in time linear in the size of the formula. The number of new variables is equal to the number of negations and disjunctions. 
\qed
\end{proof}

\section{Algorithm Description and Complexity Analysis}

We now describe the full algorithm using the following notations:
\begin{itemize}
\item The input formula is given by $\varphi = Q_1\vec x_1\cdots Q_m \vec x_m \psi $. Each $Q_i\vec x_i$ represents a {\em quantifier block}, where $Q_i$ is either $\exists$ or $\forall$. $Q_{i}$ and $Q_{i+1}$ are different quantifiers. We write $\vec x=(\vec x_1,...,\vec x_m)$. $\psi$ is a quantifier-free formula in $\vec x$ and $\vec y$ given in NNF, where $\vec y$ are free variables. 
\item We assume the innermost quantifier is existential, $Q_m=\exists$. (Otherwise we apply quantifier elimination on the negation of the formula.)
\end{itemize}
\subsection{Algorithm Description}
\begin{algorithm}[h]
\caption{Quantifier Elimination for $\varphi= Q_1 \vec x_1\cdots Q_m\vec x_m. \psi$}
\label{mainalgo}
    \begin{algorithmic}[1]
\STATE {\bf Input:} $\varphi= Q_1 \vec x_1\cdots Q_m\vec x_m.\psi(\vec x_1,...,\vec x_m, \vec y)$ where $m$ is the number of quantifier alternations, $Q_mx_m$ is an existential block ($Q_m = \exists$), and $\psi$ is in negation normal form.
\STATE {\bf Output:} A quantifier-free equivalent formula of $\varphi$
\STATE {\bf \em Procedure} {\bf\em QE($\varphi$)}
\WHILE{$m\geq 1$}
\STATE $\exists\vec u. \psi' \gets$ {\bf\em Eliminate\_Negations($\psi$)}
\STATE $\exists\vec v. (f_1=0\wedge \cdots \wedge f_r=0)\gets$ {\bf\em Formula\_Flattening($\psi'$)}
\STATE $\varphi \gets Q_1 \vec x_1\cdots Q_m\vec x_m\exists \vec u\exists \vec v. (f_1=0\wedge \cdots \wedge f_r=0)$
\STATE $\{g_1,...,g_{s}\}$ = Gr\"obner\_Basis($\langle f_1,...,f_r,\vec x^q-\vec x, \vec u^q-\vec u, \vec v^q-\vec v\rangle$)
\IF{$m=1$}
\STATE $\varphi\gets g_1=0\wedge \cdots \wedge g_s=0$
\STATE {\bf break}
\ENDIF
\STATE $\varphi \gets Q_1\vec x_1\cdots  Q_{m-2} \vec x_{m-2} Q_{m-1} \vec x_{m-1}. (\bigwedge_{i=1}^s g_i=0)$ where $Q_{m-1}=\forall$
\STATE $\varphi \gets Q_1 \vec x_1\cdots Q_{m-2} \vec x_{m-2}. (\bigwedge_{i=1}^{s} \neg \exists \vec x_{m-1} (g_i\neq 0))$ 
\FOR{$i=1$ to $s$}
\STATE $\bigwedge_{j=1}^{t_i} h_{ij}=0 \gets${\bf \em QE}$(\exists \vec x_{m-1} (g_i\neq 0))$
\ENDFOR
\STATE $\varphi \gets Q_1 \vec x_1\cdots Q_{m-2} \vec x_{m-2} \bigwedge_{i=1}^{s} (\bigvee_{j=1}^{t_i} h_{ij}\neq 0)$
\STATE $m\gets m-2$
\ENDWHILE
\RETURN $\varphi$
    \end{algorithmic}
\end{algorithm}

Section 3 shows how to eliminate existential quantifiers over conjunctions of positive atomic formulas. Section 4 shows how formulas can be put into conjunctions of positive atoms with new quantified variables. It follows that we can always eliminate the innermost existential quantifiers, and iterate the process by flipping the universal quantifiers into existential ones. We first emphasize some special features of the algorithm:
\begin{itemize}
\item In each elimination step, a full {\em quantifier block} is eliminated. This is desirable in practical problems, which usually contain many variables but few alternating quantifier blocks. For instance, many verification problems are expressible using two blocks of quantifiers ($\forall\exists$-formulas). 
\item The quantifier elimination step essentially transforms an ideal to another ideal. This corresponds to transforming conjunctions of atomic formulas to conjunctions of new atomic formulas. Therefore, the quantifier elimination steps do not introduce new nesting of Boolean operators.
\item The algorithm always directly outputs CNF formulas.
\end{itemize}

A formal description of the full algorithm is given in Algorithm \ref{mainalgo}. The main steps in the algorithm are explained below. Each loop of the algorithm contains three main steps. In Step 1, $\varphi$ is flattened; in Step 2, the innermost existential quantifier block is eliminated; in Step 3, the next (universal) quantifier block is eliminated and the process loops back to Step 1. The algorithm terminates either after Step 2 or Step 3, when there are no remaining quantifiers to be eliminated.

$\bullet$ {\bf Step 1: (Line 5-7)} 

First, since $\psi$ is in NNF, we use Theorem \ref{flatten-theorem} to eliminate the negations and disjunctions in $\psi$ to get $\llbracket\varphi\rrbracket=\llbracket Q_1\vec x_1\cdots Q_m\vec x_m \exists \vec u\exists \vec v. (\bigwedge_{i=1}^{r} f_i=0)\rrbracket$, where $\vec u$ are the variables introduced for eliminating negations (Lemma \ref{no-negation}), and $\vec v$ are the variables introduced for eliminating disjunctions (Lemma \ref{no-disjunction}).

$\bullet$ {\bf Step 2: (Line 8-12)}

Since $Q_m=\exists$, using Theorem 4.1, we can eliminate the variables $\vec x_m, \vec u, \vec v$ simultaneously by computing $$\{g_1,...,g_{r_1}\} = GB(\langle f_1,...,f_r, \vec x_m^q-\vec x_m, \vec u^q-\vec u, \vec v^q-\vec v, \vec y^q-\vec y\rangle)\cap F_q[\vec x_1,...,\vec x_{m-1}, \vec y].$$ 

Now we have $\llbracket \varphi \rrbracket = \llbracket Q_1\vec x_1\cdots Q_{m-1}\vec x_{m-1}.(\bigwedge_{i=1}^{s} (g_i=0) ) \rrbracket.$

If there are no more quantifiers, the output is $\bigwedge_{i=1}^{s} (g_i=0)$, which is in CNF.

$\bullet$ {\bf Step 3: (Line 13-18)} 

Since $Q_{m-1}=\forall$, we distribute the block $Q_{m-1}\vec x_{m-1}$ over the conjuncts:
{$$\llbracket \varphi\rrbracket = \llbracket Q_1\vec x_1\cdots Q_{m-2}\vec x_{m-2} (\bigwedge_{i=1}^{s} (\neg\exists\vec x_{m-1}\neg (g_i=0)) )\rrbracket 
$$}Now we do elimination recursively on $\exists \vec x_{m-1}(\neg g_i=0)$ for each $i\in \{1,...,s\}$, which can be done using only Step 1 and Step 2. We obtain:{
\begin{eqnarray}\label{secondform}
\llbracket \exists  \vec x_{m-1} (\neg g_i=0)\rrbracket = \llbracket \exists \vec x_{m-1}\exists u'. (g_i\cdot u' -1 = 0)\rrbracket = \llbracket \bigwedge_{j=1}^{t_i} h_{ij}=0 \rrbracket
\end{eqnarray}
and the formula becomes (note that the extra negation is distributed){
\begin{eqnarray}\label{firstform}
\llbracket \varphi \rrbracket = \llbracket Q_1\vec x_1\cdots Q_{m-2}\vec x_{m-2}.(\bigwedge_{i=1}^{s} (\bigvee_{j=1}^{t_i} h_{ij}\neq 0))\rrbracket.
\end{eqnarray}
If there are no more quantifiers left, the output formula is $\bigwedge_{i=1}^{s} (\bigvee_{j=1}^{t_i} h_{ij}\neq 0)$, which is in CNF. Otherwise, $Q_{m-2} = \exists$, and we return to Step 1.

\begin{theorem}[Correctness]
Let $\varphi(\vec x; \vec y)$ be a formula $Q_1\vec x_i\cdots Q_m \vec x_m.\psi$ where $Q_m=\exists$ and $\psi$ is in NNF. Algorithm \ref{mainalgo} computes a quantifier-free formula $\varphi'(\vec y)$, such that $\llbracket \varphi(\vec x; \vec y)\rrbracket = \llbracket \varphi'(\vec y) \rrbracket$ and $\varphi'$ is in CNF.
\end{theorem}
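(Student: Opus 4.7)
The plan is to proceed by induction on the number $m$ of quantifier alternations, mirroring the three-step loop of Algorithm~\ref{mainalgo}. For the base case $m=1$ the formula is $\exists \vec x_1.\psi$ with $\psi$ in NNF; I first apply Theorem~\ref{flatten-theorem} to rewrite $\psi$ equivalently as $\exists \vec u\exists \vec v.\bigwedge_{i=1}^r f_i=0$, absorb the fresh quantifiers into the outer existential block, and then invoke Theorem~\ref{main-theorem} on the combined block $\exists \vec x_1\vec u\vec v$. The resulting Gr\"obner basis intersected with $F_q[\vec y]$ yields $\{g_1,\ldots,g_s\}$ with $\llbracket\varphi\rrbracket=\llbracket\bigwedge_i g_i=0\rrbracket$, which is (trivially) in CNF. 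This matches the code path through lines 5--11.

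For the inductive step with $m\geq 2$, suppose correctness holds for every input with fewer than $m$ alternations. Steps 1--2 apply just as in the base case to the innermost existential block, so by Theorems~\ref{flatten-theorem} and~\ref{main-theorem} we obtain the intermediate equivalence
$$\llbracket\varphi\rrbracket=\llbracket Q_1\vec x_1\cdots Q_{m-1}\vec x_{m-1}.\bigwedge_{i=1}^s g_i=0\rrbracket.$$
Since $Q_{m-1}=\forall$, Step~3 invokes the classical duality $\forall\vec x_{m-1}\bigwedge_i(g_i=0)\equiv\bigwedge_i\neg\exists\vec x_{m-1}(g_i\neq 0)$, and each disequation is then replaced by an equation using the Rabinowitsch-style identity $g_i\neq 0\iff\exists u'.(g_iu'-1=0)$, which is valid pointwise over any field, in particular over $F_q$. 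Each inner problem $\exists\vec x_{m-1}\exists u'.(g_iu'-1=0)$ has a single existential block and hence falls under the base case; the recursive call on line~16 returns a CNF $\bigwedge_j h_{ij}=0$, which after re-applying the outer negation becomes $\bigvee_j(h_{ij}\neq 0)$. The conjunction over $i$ is in CNF and in NNF. If $m=2$ the algorithm terminates with this formula; otherwise the loop restarts with $m-2\geq 1$ remaining alternations and the induction hypothesis closes the argument.

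Three points deserve care. First, I must check that absorbing the flattening variables $\vec u,\vec v$ into the innermost existential block is legitimate: this follows because these variables are fresh and the lex order ranks them (together with $\vec x_m$) above all remaining variables, so Theorem~\ref{key-lemma} eliminates them simultaneously via a single projection. Second, I must confirm that the output of Step~3, namely $\bigwedge_i\bigvee_j(h_{ij}\neq 0)$, is in NNF so that the next iteration's flattening step applies; this is immediate since every negation sits directly on an atom. The most delicate step, and the one I expect to be the main conceptual obstacle, is the Rabinowitsch reduction: it is the hinge that converts a universally-quantified disequation into the positive-atom form to which Theorem~\ref{main-theorem} can be reapplied, and its correctness relies essentially on $F_q$ being a field (every nonzero value is invertible). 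Termination is immediate because $m$ decreases by exactly two per outer iteration and the inner recursion only invokes Steps 1--2 with no further recursion.
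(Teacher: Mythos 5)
Your proposal is correct and takes essentially the same approach as the paper's proof: both verify that each step of the loop (flattening via Theorem~\ref{flatten-theorem}, projection via Theorem~\ref{main-theorem}, and the duality-plus-Rabinowitsch rewrite for the universal block) preserves the realization, and both observe that the terminating output is in CNF. The only difference is presentational: you make explicit the induction on the number $m$ of quantifier blocks, whereas the paper leaves that structure implicit in its ``the loop preserves equivalence'' argument.
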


\subsection{Complexity Analysis}

The worst-case complexity of Gr\"obner basis computation on ideals in $F_q[\vec x]$ that contain $x_i^q-x_i$ for each variable $x_i$ is known to be single exponential in the number of variables in time and space. This follows from the complexity result for Gr\"obner basis computation of zero-dimensional radical ideals~\cite{Lakshman} (a direct proof can be found in \cite{gao09}). 

\begin{proposition}\label{gb-time}
Let $J=\langle f_1,...,f_r, \vec x^q-\vec x\rangle\subseteq F_q[x_1,...,x_n]$ be an ideal. The time and space complexity of Buchberger's Algorithm is bounded by $q^{O(n)}$, assuming that the length of input ($f_1,...,f_r$) is dominated by $q^{O(n)}$.
\end{proposition}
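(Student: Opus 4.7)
The plan is to exploit the fact that adjoining the field polynomials $\vec x^q-\vec x$ to $J$ forces $V(J)\subseteq F_q^n$, making $J$ a zero-dimensional ideal whose variety has at most $q^n$ points; by Lemma \ref{first-lemma} it is moreover radical. This places $J$ squarely in the regime where Lakshman's bound \cite{Lakshman} applies, so the proof amounts to translating that bound to our setting. Since the input size is already assumed to be $q^{O(n)}$, one only needs the intrinsic cost of Buchberger's algorithm on such $J$.

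The first step is to show that every intermediate polynomial can be kept small. Because $x_i^q-x_i \in J$, each polynomial may be reduced modulo the field polynomials to have degree strictly less than $q$ in every variable. The $F_q$-vector space of such ``normal form'' polynomials has dimension $q^n$, so each intermediate polynomial in Buchberger's algorithm can be stored in $q^{O(n)}$ bits, and an S-polynomial computation or reduction step costs $q^{O(n)}$ arithmetic operations in $F_q$.

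The second step is to bound the number of steps. In a reduced Gr\"obner basis the leading monomials are pairwise distinct and all lie in the set of at most $q^n$ monomials of per-variable degree $<q$, so $|G|\le q^n$. Moreover, each non-trivial reduction during Buchberger's algorithm produces a new leading monomial that was not previously in $\langle LM(\cdot)\rangle$, and these monomials are also drawn from the same finite set; hence the algorithm performs at most $q^n$ successful reductions and at most $q^{O(n)}$ S-pair considerations. Multiplying the per-step cost by the number of steps yields the claimed $q^{O(n)}$ bound on both time and space.

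The main obstacle is the second step: one must argue that aggressively interreducing against the field polynomials keeps \emph{all} intermediate data inside the $q^n$-dimensional space of normal forms, and that the usual Buchberger selection strategies respect this reduction. Once that invariant is established, the leading-monomial counting argument (the content of Lakshman's theorem for zero-dimensional radical ideals, rederived directly in \cite{gao09}) closes the analysis, and the complexity bound falls out immediately.
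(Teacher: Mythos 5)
The paper gives no proof for this proposition; it simply observes that $J$ is zero-dimensional and, by Lemma~\ref{first-lemma}, radical, then cites Lakshman~\cite{Lakshman} for the $q^{O(n)}$ bound and~\cite{gao09} for a direct argument. Your sketch is, in substance, a reconstruction of that direct argument, so it matches the route the paper is implicitly taking. The one place you flag as an obstacle --- ensuring every intermediate object stays inside the $q^n$-dimensional normal-form space and that the selection strategy respects this --- is in fact not a gap: since the field polynomials $x_i^q-x_i$ are among the generators, we have $x_i^q \in \langle LM(G)\rangle$ for every $i$ from the very first step, so any nonzero remainder produced by a successful S-pair reduction must have a leading monomial of per-variable degree strictly less than $q$ (otherwise it would still be reducible against a field polynomial). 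That observation gives you the $q^n$ cap on successful reductions unconditionally, and pairing it with eager normal-form reduction modulo the field polynomials --- a permissible reduction strategy in Buchberger's algorithm --- bounds every stored polynomial, S-polynomial, and reduction chain by $q^{O(n)}$. With that remark in place, your proposal is a correct and complete account of the bound the paper asserts by citation.
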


Now we are ready to estimate the complexity of our algorithm.

\begin{theorem}[Complexity]\label{complexity}
Let $\varphi$ be the input formula with $m$ quantifier blocks. When $m\leq 2$, the time/space complexity of Algorithm 1 is bounded by $q^{O(|\varphi|)}$. Otherwise, it is bounded by $q^{q^{O(|\varphi|)}}$.
\end{theorem}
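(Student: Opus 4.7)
The plan is to combine three ingredients: the linear costs of negation elimination (Lemma 4.1) and formula flattening (Theorem 4.1); the single-exponential Gr\"obner basis bound of Proposition 5.1; and a careful per-iteration size analysis of Algorithm 1. First I would bound the cost of one pass of the while loop. Starting from a formula whose quantifier-free part has size $N$, Step 1 (Lines 5--7) runs in time $O(N)$ and produces a conjunction $\bigwedge_{i=1}^{r} f_i = 0$ in a polynomial ring with $O(N)$ variables (originals plus the fresh $\vec u,\vec v$). Because the ideal in Line 8 contains all field polynomials, Proposition 5.1 bounds the Gr\"obner basis computation by $q^{O(N)}$ in both time and space; the output $\{g_1,\dots,g_s\}$ has $s \le q^{O(N)}$ polynomials, each of length $q^{O(N)}$ after reduction modulo the field polynomials (so every $g_i$ has degree $<q$ in each variable).

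If $m=1$ the algorithm exits at Line 10 and the total cost is $q^{O(|\varphi|)}$. If $m=2$, Step 3 spawns $s \le q^{O(|\varphi|)}$ recursive single-block eliminations of $\exists \vec x_{m-1}(g_i u' - 1 = 0)$; each has only $O(|\varphi|)$ variables and a single generator whose reduced length fits the $q^{O(n)}$ input hypothesis of Proposition 5.1, so each subcall still costs $q^{O(|\varphi|)}$. Summing over the $s$ calls and reassembling in Line 18 keeps the total within $q^{O(|\varphi|)}$, establishing the $m\le 2$ bound.

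For $m\ge 3$ I iterate the per-pass bound. Applying the single-pass analysis to an input of size $N$ gives an output of size at most $q^{cN}$ and cost $q^{cN}$ for a fixed constant $c$. Two passes of the loop (which cover the case of more than two alternating blocks, as asserted in the paper's discussion preceding the theorem) therefore blow the size up from $|\varphi|$ first to $q^{c|\varphi|}$ and then to $q^{c q^{c|\varphi|}} = q^{q^{O(|\varphi|)}}$, which also dominates the total runtime since the earlier stages are absorbed by the final Gr\"obner basis call.

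The main obstacle is tracking the ``number-of-variables'' parameter $n$ of Proposition 5.1 across iterations: the $\vec v$ variables added by flattening scale with the current formula size rather than with $|\varphi|$, so after one pass the effective $n$ is already $q^{O(|\varphi|)}$. The delicate part of the accounting will be verifying that each $g_i$ and each reduced flattened ideal indeed has length within the $q^{O(n)}$ hypothesis of Proposition 5.1 (using the fact that every polynomial in the ring $F_q[\vec x]/\langle \vec x^q-\vec x\rangle$ reduces to one of at most $q^n$ monomials), and that disjunction elimination (Lemma 4.2) contributes only one new variable per disjunction rather than per conjunct, so that the linear-size guarantee of Theorem 4.1 is preserved.
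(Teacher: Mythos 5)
Your analysis of the $m\le 2$ case, the per-pass cost accounting, and the use of Proposition~5.1 together with the $q^n$-monomial bound all match the paper's reasoning. But your $m>2$ argument has a genuine gap that the paper explicitly closes and you do not. You write that a single pass turns input size $N$ into output size $q^{cN}$ at cost $q^{cN}$, and then conclude by composing two passes. The problem is that Algorithm~1 runs its while loop $\lceil m/2\rceil$ times, and you have not argued that two passes suffice; for $m\ge 5$ the loop executes a third time, a fourth time, and so on. If the ``output size $q^{cN}$'' estimate were the best available for every pass, iterating it would produce an exponential tower of height $\lceil m/2\rceil$, not the fixed double-exponential bound $q^{q^{O(|\varphi|)}}$ claimed by the theorem.

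The paper avoids the tower by a sharper observation about what each loop iteration actually outputs: although the flattened intermediate formula in a later pass may involve $q^{O(|\varphi|)}$ fresh $\vec u,\vec v$ variables (which is why the Gr\"obner basis call can cost $q^{q^{O(|\varphi|)}}$), the polynomials $g_i$ produced by the elimination ideal lie in the ring over only the \emph{remaining original} variables $\vec x_1,\dots,\vec x_{m-2},\vec y$, of which there are at most $|\varphi|$. Since each such $g_i$ is reduced modulo the field polynomials and so has degree $<q$ in each variable, both the number $s$ of generators and their total size are bounded by $q^{O(|\varphi|)}$ --- single-exponential in $|\varphi|$ and, crucially, \emph{independent of how large the flattened intermediate ideal got}. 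Hence the input to every subsequent loop pass is again of size $q^{O(|\varphi|)}$, every pass costs at most $q^{q^{O(|\varphi|)}}$, and the overall bound does not grow with $m$. You actually have the ingredient you need in your final paragraph (the $q^n$-monomial reduction), but you apply it to justify the Proposition~5.1 hypothesis rather than to cap the size of the loop's output, which is the step that keeps the recursion from escalating beyond double-exponential.
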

\begin{proof}
The complexity is dominated by Gr\"obner basis computation, whose complexity is determined by the number of variables occurring in the ideal. When $m\leq 2$, the main loop is executed once, and the number of newly introduced variables is bounded by the original length of the input formula. Therefore, Gr\"obner basis computations can be done in single exponential time/space. When $m>2$, the number of newly introduced variables is bounded by the length of the formula obtained from the previous run of the main loop, which can itself be exponential in the number of the remaining variables.  In that case, Gr\"obner basis computation can take double exponential time/space. 

$\bullet$ {\bf Case $m\leq 2$:}

In Step 1, the number of the introduced $\vec u$ and $\vec v$ variables equals to the number of negations and disjunctions that appear in the $\varphi$. Hence the total number of variables is bounded by the length of $\varphi$. The flattening takes linear time and space, $O(|\varphi|)$, as proved in Theorem \ref{flatten-theorem}.

In Step 2, by Proposition \ref{gb-time}, Gr\"obner basis computation takes time/space $q^{O(|\varphi|)}$. 

In Step 3, the variables $\vec x_m, \vec u, \vec v$ have all been eliminated. The length of each $g_iu'-1$ (see Formula (\ref{secondform}) in Step 3) is bounded by the number of monomials consisting of the remaining variables, which is $O(q^{(|\vec y|+\sum_{i=1}^{m-1}|\vec x_i|)})$ (because the degree on each variable is lower than $q$). Following Proposition \ref{gb-time}, Gr\"obner basis computation on each $g_iu'-1$ takes time and space $q^{O(|\vec y|+\sum_{i=1}^{m-1}|\vec x_i|)}$, which is dominated by $q^{O(|\varphi|)}$. Also, since the number $s$ of conjuncts is the number of polynomials in the Gr\"obner basis computed in the previous step, we know $s$ is bounded by $q^{O(|\varphi|)}$. In sum, Step 3 takes $q^{O(|\varphi|)}$ time/space in worst case.

Thus, the algorithm has worst-case time and space complexity $q^{O(|\varphi|)}$ when $m\leq 2$.

$\bullet$ {\bf Case $m>2$:}

When $m> 2$, the main loop is iterated for more than one round. The key change in the second round is that, the initial number of conjunctions and disjunctions in each conjunct could both be exponential in the number of the remaining variables ($\vec x_1,...,\vec x_{m-2}$). That means, writing the max of $t_i$ as $t$ (see Formula~(\ref{firstform}) in Step 3), both $s$ and $t$ can be of order $q^{O(|\varphi|)}$. 

In Step 1 of the second round, the number of the $\vec u$ variables introduced for eliminating the negations is $s\cdot t$. The number of the $\vec v$ variables introduced for eliminating disjunctions is also $s\cdot t$. Hence the flattened formula may now contain $q^{O(|\varphi|)}$ variables. 

In Step 2 of the second round, Gr\"obner basis computation takes time and space exponential in the number of variables. Therefore, Step 2 can now take $q^{q^{O(|\varphi|)}}$ in time and space. 

In Step 3 of the second round, however, the number of conjuncts $s$ {\em does not} become doubly exponential. This is because $g_i$ in Step 3 no longer contains the exponentially many introduced variables -- they were already eliminated in the previous step. Thus $s$ is reduced back to single exponential in the number of the remaining variables; i.e., it is bounded by $q^{O(|\varphi|)}$. Similarly, the Gr\"obner basis computation on each $g_iu'-1$, which now contains variables $\vec x_1,...,\vec x_{m-1}, \vec y$,  takes time and space $q^{O(|\varphi|)}$. In all, Step 3 takes time and space $q^{O(|\varphi|)}$.

In sum, the second round of the main loop can take time/space $q^{q^{O(|\varphi|)}}$. But at the end of the loop, the size of formula is reduced to $q^{O(|\varphi|)}$ after the Gr\"obner basis computations, because it is at most single exponential in the number of the remaining variables. Therefore, the double exponential bound remains for future iterations of the main loop.\qed
\end{proof}

Recently, \cite{pspaceGB} reports a Gr\"obner basis computation algorithm in finite fields using polynomial space. This algorithm is theoretical and cannot be applied yet. Given the analysis above, if such a polynomial-space algorithm for Gr\"obner basis computation can be practically used, the intermediate expressions do not have the double-exponential blow-up. On the other hand, it does not lower the space bound of our algorithm to polynomial space, because during flattening of the disjunctions, the introduced terms are multiplied together. To expand the introduced terms, one may still use exponential space. It remains further work to investigate whether the algorithm can be practically used and how it compares with Buchberger's Algorithm.
\begin{proposition}
If there exists a polynomial-space Gr\"obner basis computation algorithm over finite fields for ideals containing the field polynomials, the time/space complexity of our algorithm is bounded by $q^{O(|\varphi|)}$.
 \label{pspace}
\end{proposition}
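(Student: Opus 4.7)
The plan is to revisit the proof of Theorem~\ref{complexity} and locate the single point at which the double-exponential bound arose, then substitute the new PSPACE complexity bound at that point and check that no other step was sensitive to the distinction. Concretely, I would retrace the three-step structure of the main loop iteration-by-iteration. For $m \le 2$, Theorem~\ref{complexity} already yields $q^{O(|\varphi|)}$ without any change, so nothing needs to be redone. For $m > 2$, the analysis of Theorem~\ref{complexity} shows that in the second (and every subsequent) iteration, Step~1 produces a flattened formula with $s \cdot t = q^{O(|\varphi|)}$ introduced $\vec u, \vec v$ variables, and it was only Step~2's Gr\"obner basis call on an ideal in $q^{O(|\varphi|)}$ variables that contributed the $q^{q^{O(|\varphi|)}}$ bound.

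The core of the argument is to bound the input size to this Gr\"obner basis call. Each polynomial at this stage lies in $F_q[\vec x_1,\dots,\vec x_{m-2},\vec u,\vec v,\vec y]$, and because the ideal contains all field polynomials, every generator is equivalent to a polynomial whose degree in each variable is less than $q$; hence each generator has at most $q^{O(|\varphi|)}$ monomials. Combining with the $q^{O(|\varphi|)}$ bound on the number of generators, the total input size to the GB call is $q^{O(|\varphi|)}$. Invoking the hypothesized polynomial-space Gr\"obner basis algorithm, the space used by this step is $\mathrm{poly}(q^{O(|\varphi|)}) = q^{O(|\varphi|)}$, and the time is bounded by the standard PSPACE $\Rightarrow$ EXPTIME relationship against that input size, again giving $q^{O(|\varphi|)}$.

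Next I would verify that Step~3 and subsequent iterations remain at $q^{O(|\varphi|)}$. The key observation, already used in the original proof, is that after Step~2 the variables $\vec x_m, \vec u, \vec v$ are gone, so the $g_i$ and the auxiliary polynomials $g_i u' - 1$ in equation~(\ref{secondform}) involve only the surviving variables $\vec x_1,\dots,\vec x_{m-1}, \vec y$. Thus each recursive GB call inside Step~3 is again on an input of size $q^{O(|\varphi|)}$, and with the PSPACE algorithm this stays within $q^{O(|\varphi|)}$ space and time. At the end of the iteration the formula collapses back to size $q^{O(|\varphi|)}$ (single-exponential in the surviving variables), so the induction closes cleanly: each subsequent round begins with a formula of size $q^{O(|\varphi|)}$ rather than $q^{q^{O(|\varphi|)}}$.

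The main obstacle I expect is the second paragraph's input-size bound: one has to be careful that when we flatten a CNF of $s$ clauses of width up to $t$ with $s, t = q^{O(|\varphi|)}$, the resulting ideal's total bit-length (including the introduced multiplications $v_i \cdot J_{\theta_1} + (1-v_i)\cdot J_{\theta_2}$) stays at $q^{O(|\varphi|)}$ rather than silently blowing up to $q^{O(|\varphi|)} \cdot q^{O(|\varphi|)} \cdot \text{(degree terms)}$ that would round to $q^{q^{O(|\varphi|)}}$. This is precisely the phenomenon the paper's preceding paragraph flags (``the introduced terms are multiplied together\dots one may still use exponential space''), and it is what prevents the bound from dropping to polynomial; but the multiplicative blow-up remains only single-exponential in $|\varphi|$, which is all that is needed for the claim. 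Once this bookkeeping is in place, plugging the PSPACE hypothesis into the otherwise-unchanged proof of Theorem~\ref{complexity} yields the stated $q^{O(|\varphi|)}$ bound for arbitrary quantifier alternation.
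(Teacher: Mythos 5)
The paper gives no formal proof of Proposition~\ref{pspace}; there is only the informal discussion in the preceding paragraph, which traces the doubly-exponential blow-up in the proof of Theorem~\ref{complexity} to the Gr\"obner basis call on an exponential number of variables and observes that a polynomial-space GB routine would keep the intermediate expressions singly exponential. Your overall plan --- retracing the $m>2$ case of Theorem~\ref{complexity}, substituting the PSPACE hypothesis at the Step~2 GB call, and checking that the output that feeds Step~3 collapses back to single-exponential size --- matches that informal argument, and your bookkeeping of the input size $q^{O(|\varphi|)}$ (number of generators, monomials per generator, introduced $\vec u,\vec v$ variables) is in the right spirit.

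There is, however, a genuine gap in the time bound, at the point where you invoke ``the standard PSPACE $\Rightarrow$ EXPTIME relationship against that input size, again giving $q^{O(|\varphi|)}$.'' A polynomial-space algorithm running on an input of size $N$ uses space $\mathrm{poly}(N)$, and the time bound one gets from space is $2^{O(\mathrm{poly}(N))}$. Here $N = q^{O(|\varphi|)}$, so the time you can actually conclude from the PSPACE hypothesis alone is $2^{q^{O(|\varphi|)}}$ --- still \emph{doubly} exponential in $|\varphi|$, not $q^{O(|\varphi|)}$. The space side of your argument is sound (working space $\mathrm{poly}(q^{O(|\varphi|)}) = q^{O(|\varphi|)}$, and the filtered elimination-ideal output is over only $O(|\varphi|)$ surviving variables, hence $q^{O(|\varphi|)}$), and it is the space bound the paper's own discussion actually defends when it says ``the intermediate expressions do not have the double-exponential blow-up.'' To justify the time half of the proposition, the PSPACE hypothesis by itself is not enough; you would need the stronger (and not unreasonable, but additional) assumption that the GB routine also runs in time single-exponential in the number of variables, i.e.\ $q^{O(n)}$ on $n$-variable inputs, which is the bound Proposition~\ref{gb-time} gives for Buchberger on ideals containing the field polynomials, only without its exponential space cost. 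The paper's statement is itself silent on this distinction, so the precise hypothesis needed to recover the claimed time bound is worth making explicit.
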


\section{Example and Application}

\subsection{A Walk-through Example}

Consider the following formula over $F_3$: $$\varphi: \exists b \forall a \exists y\exists x.((y=ax^2+bx+c)\wedge (y= ax))$$ which has three alternating quantifier blocks and one free variable. We ask for a quantifier-free formula $\psi(c)$ equivalent to $\varphi$. 

We fix the lexicographic ordering to be $x\succ y\succ a\succ b\succ c.$ First, we compute the Gr\"obner basis $G_0$ of the ideal: {$\langle y-ax^2-bx-c, y-ax, x^3-x,y^3-y,a^3-a,b^3-b,c^3-c\rangle,$}and obtain the Gr\"obner basis of the elimination ideal { $$G_1= G_0\cap F_3[a,b,c]=\{abc + ac^2 + b^2c - c, a^3-a, b^3 - b, c^3 - c\}.$$}After this, $x$ and $y$ have been eliminated, and we have:
 {\begin{eqnarray*}
\llbracket \varphi \rrbracket &= &\llbracket\exists b\forall a.((abc + ac^2 + b^2c - c=0)\wedge (a^3-a=0) \wedge (b^3 - b=0) \wedge (c^3 - c=0))\rrbracket\\
&=& \llbracket \exists b\forall a.(abc + ac^2 + b^2c - c=0)\rrbracket  \ \ \ \ \ \ \ \\
&=& \llbracket \exists b.(\neg \exists a \exists u.(u(abc + ac^2 + b^2c - c)-1=0))\rrbracket   
 \end{eqnarray*}}Now we eliminate quantifiers in $\exists a \exists u ((abc + ac^2 + b^2c - c)\cdot u-1=0),$ again by computing the Gr\"obner basis $G_2$ of the ideal {$$\langle (abc + ac^2 + b^2c - c)u-1,a^3-a,b^3-b,c^3-c,u^3-u\rangle \cap F_3[b,c].$$} We obtain $G_2=\{ b^2 - bc, c^2 - 1\}$. Therefore 
{$\llbracket \varphi \rrbracket = \llbracket \exists b (\neg (b^2 - bc=0\wedge c^2 - 1=0))\rrbracket$.} (Note that if both $b$ and $c$ are both free variables, $b^2 - bc\neq 0\vee c^2 - 1\neq 0$ would be the quantifier-free formula containing $b,c$ that is equivalent to $\varphi$.) 

Next, we introduce $u_1$ and $u_2$ to eliminate the negations, and $v$ to eliminate the disjunction: $$\llbracket\varphi\rrbracket = \llbracket\exists b \exists u_1\exists u_2\exists v.( ((b^2 - bc)u_1-1)v=0\wedge ((c^2-1)u_2)(1-v)=0)\rrbracket.$$

We now do a final step of computation of the Gr\"obner basis $G_3$ of:
{$$\langle ((b^2 - bc)u_1-1)v, ((c^2-1)u_2)(1-v),  b^3-b, c^3-c, u_1^3-t_1, u_2^3-t_2, v^3-v\rangle\cap F_3[c].$$}We obtain $G_3=\{c^3-c\}$. This gives us the result formula $\llbracket\varphi\rrbracket =\llbracket c^3-c=0\rrbracket,$ which means that $c$ can take any value in $F_3$ to make the formula true.
\subsection{Analyzing a Biological Controller Design}

We studied a virus competition model named S2VD~\cite{virus}, which models the dynamics of virus competition as a polynomial system over finite fields. The authors aimed to design a controller to ensure that one virus prevail in the environment. They pointed out that there was no existing method for verifying its correctness. The current design is confirmed effective by computer simulation and lab experiments for a wide range of initializations. We attempted to establish the correctness of the design with formal verification techniques. However, we found bugs in the design. 

All the Gr\"obner basis computations in this section are done using scripts in the SAGE system~\cite{sage}, which uses the underlying Singular implementation~\cite{singular}. All the formulas below are solved within 5 seconds on a Linux machine with 2GHz CPU and 2GB RAM. They involve around 20 variables over $F_4$, with nonlinear polynomials containing multiplicative products of up to 50 terms.  

\begin{figure}[ht]
\centering
\label{fig}
\includegraphics[width=80pt]{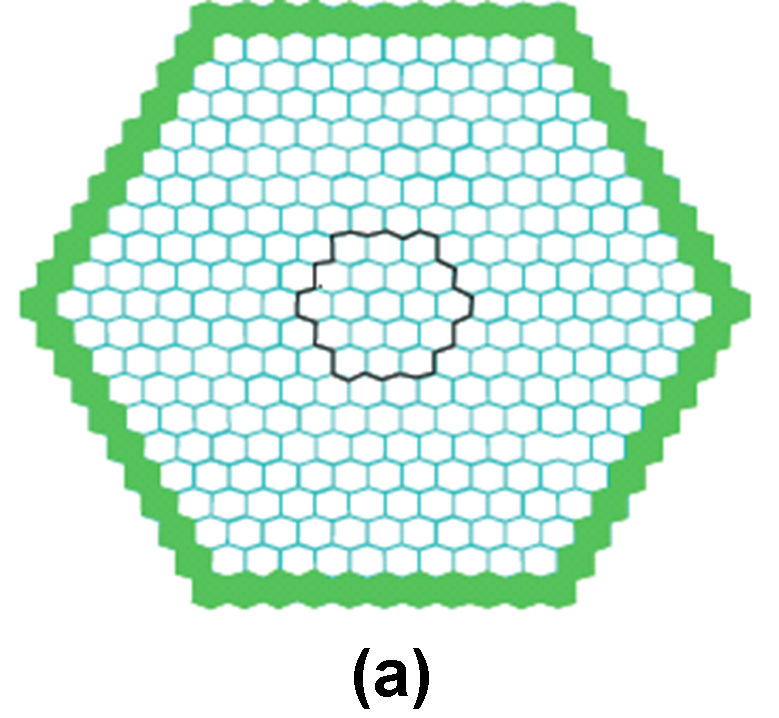}\hspace{.25in}
\includegraphics[width=70pt]{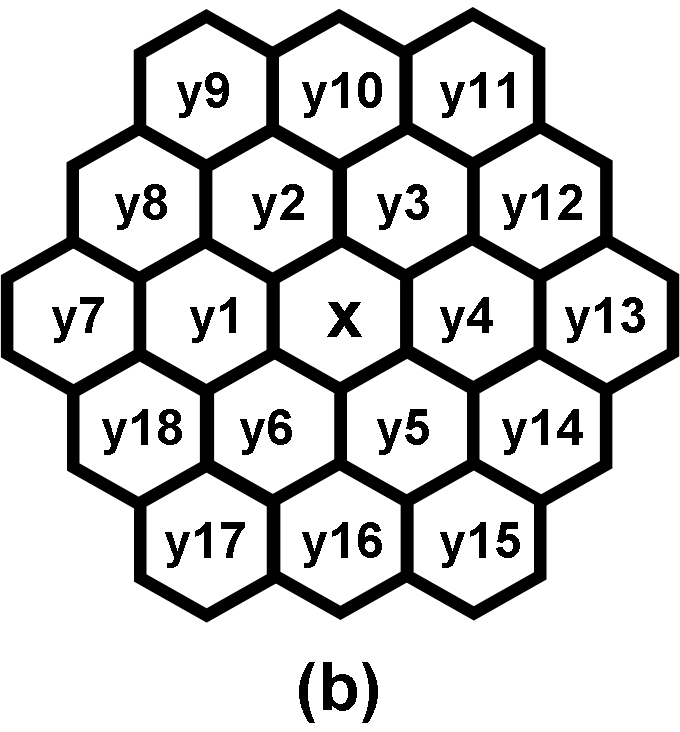}\hspace{.25in}
\includegraphics[width=50pt]{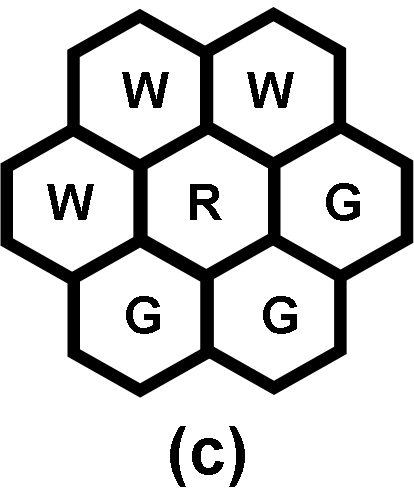}
\caption{(a) The ten rings of S2VD; (b) Cell x and its neighbor $\vec y$ cells; (c) The counterexample}
\end{figure}

\subsubsection{The S2VD Model}The model consists of a hexagonal grid of cells. Each hexagon represents a cell, and each cell has six neighbors. There are four possible colors for each cell. A green cell is infected with (the good) Virus G, and a red cell is infected with (the bad) Virus R. When the two viruses meet in one cell, Virus G captures Virus R and the cell becomes yellow. A cell not infected by any virus is white. The dynamics of the system is determined by the interaction of the viruses. 

There are ten rings of cells in the model, with a total of 331 cells (Figure 1(a)). In the initial configuration, the cells in Ring 4 to 10 are set to white, and the cells in Ring 1 to 3 can start with arbitrary colors. The aim is to have a controller that satisfies the following safety property: The cells in the outermost ring are either green or white at all times. The proposed controller detects if any cell has been infected by Virus R, and injects cells that are ``one or two rings away'' from it with Virus G. The injected Virus G is used to block the further expansion of Virus R.

Formally, the model is a polynomial system over the finite field $F_4=\{0,1,a,a+1\}$, with each element representing one color: $(0, green), (1, red), (a, white), (a+1, yellow)$. The dynamics is given by the function $f: F_4^{331}\rightarrow F_4^{331}.$ For each cell $x$, its dynamics $f_x$ is determined by the color of its six neighbors $y_1,...,y_6$, specified by the nonlinear polynomial $f_x=_{df}\gamma_2^2+ \gamma_2\gamma_1^3+a^2(\gamma_1^3+\gamma_1^2+\gamma_1)$, where $\gamma_1 = \sum_{i=1}^6 y_i$ and $\gamma_2 = \sum_{i\neq j} y_iy_j$. The designed controller is specified by another function $g:F_4^{331}\rightarrow F_4^{331}$: For each cell $x$, with $y_1,...,y_{18}$ representing the cells in the two rings surrounding it, we define $g_x=_{df}\prod_{i=1}^{18} (1-y_i)^3$. More details can be found in \cite{virus}.

\subsubsection{Applying Quantifier Elimination}
We first try checking whether the safety property itself forms an inductive invariant of the system (which is a strong sufficient check). To this end, we check whether the controlled dynamics of the system remain inside the invariant on the boundary (Ring 10) of the system. Let $x$ be a cell in Ring 10 and $\vec y = (y_1,...,y_{18})$ be the cells in its immediate two rings. We assume the cells outside Ring 10 ($y_8,...,y_{12},y_2,y_3$) are white. See Figure 1(b) for the coding of the cells. We need to decide the formula:
{\footnotesize\begin{eqnarray}\label{thirdform}\forall x(\underbrace{(\exists \vec y ((\bigwedge_{i=8}^{12} (y_i=a) \wedge y_2=a \wedge y_3=a)\wedge\mbox{Safe}(\vec y)\wedge x=F_x(\vec y)))}_{\varphi_1}\rightarrow \underbrace{x(x-a)=0)}_{\mbox{``green/white''}}\end{eqnarray}}
where (writing $\gamma_1=\sum_{i=1}^{6} y_i, \gamma_2 = \sum_{i\neq j\in \{1,...,6\}} y_iy_j$){\footnotesize\begin{eqnarray*}
\mbox{Safe}(\vec y)&=_{df}&(y_1(y_1-a) = 0\wedge y_4(y_4-a)=0\wedge y_7(y_7-a)=0\wedge y_{13}(y_{13}-a)=0)\\
F_x(\vec y)&=_{df}&(\gamma_2^2+ \gamma_2\gamma_1^3+a^2(\gamma_1^3+\gamma_1^2+\gamma_1))\cdot (\prod_{i=1}^{18}(1-y_i))^3
\end{eqnarray*}}After quantifier elimination, Formula (\ref{thirdform}) turns out to be false. In fact, we obtained $\llbracket \varphi_1\rrbracket = \llbracket x^4-x=0\rrbracket$. Therefore, the safety property itself is not an inductive invariant of the system. We realized that there is an easy counterexample of safety of the proposed controller design: Since the controller is only effective when red cells occur, it does not prevent the yellow cells to expand in all the cells. Although this is already a bug of the system, it may not conflict with the authors' original goal of controlling the red cells. However, a more serious bug is found by solving the following formula:
{\footnotesize 
\begin{eqnarray}\label{fourthform}\forall x (\underbrace{(\exists \vec y (\bigwedge_{i=1}^{18} y_i(y_i-a)(y_i-a^2)=0)\wedge x=F_x(\vec y))}_{\varphi_2}\rightarrow \underbrace{\neg(x=1)}_{\mbox{``not red''}})
\end{eqnarray}
}Formula (\ref{fourthform}) expresses the desirable property that when none of the neighbor cells of $x$ is red, $x$ never becomes red. However, we found again that $\llbracket \varphi_2\rrbracket = \llbracket x^4-x=0\rrbracket$, which means in this scenario the $x$ cell can still turn red. Thus, the formal model is inconsistent with the informal specification of the system, which says that non-red cells can never interact to generate red cells. In fact, the authors mentioned that the dynamics $F_x$ is not verified because of the combinatorial explosion. Finally, to give a counterexample of the design, we solve the formula
{\footnotesize 
\begin{eqnarray}\varphi_3 =_{df} \exists \vec y\exists x.(x=1\wedge \bigwedge_{i=1}^{6} y_i(y_i-a)(y_i-a^2)=0\wedge x=F_x(\vec y))
\end{eqnarray}
}The formula checks whether there exists a configuration of $y_1,...,y_6$ which are all non-red, such that $x$ becomes red. $\varphi_3$ evaluates to true. Further, we obtain $x=1,\vec y = (a,a,a,0,0,0)$ as a witness assignment for $\varphi_3$. This serves as the counterexample (see Figure 1(c)). 

This example shows how our quantifier elimination procedure provides a practical way of verifying and debugging systems over finite fields that were previously not amenable to existing formal methods and cannot be approached by exhaustive enumeration.

\section{Conclusion}
In this paper, we gave a quantifier elimination algorithm for the first-order theory over finite fields based on the Nullstellensatz over finite fields and Gr\"obner basis computation. We exploited special properties of finite fields and showed the correspondence between elimination of quantifiers, projection of varieties, and computing elimination ideals. We also generalized the Tseitin transformation from Boolean formulas to formulas over finite fields using ideal operations. The complexity of our algorithm depends on the complexity of Gr\"obner basis computation. In an application of the algorithm, we successfully found bugs in a biological controller design, where the original authors expressed that no verification methods were able to handle the system. In future work, we expect to use the algorithm to formally analyze more systems with finite field arithmetic. The scalability of the method will benefit from further optimizations on Gr\"obner basis computation over finite fields. It is also interesting to combine Gr\"obner basis methods and other efficient Boolean methods (SAT and QBF solving). See \cite{gao09} for a discussion on how the two methods are complementary to each other.

\section*{Acknowledgement}

The authors are grateful for many important comments from Jeremy Avigad, Helmut Veith, Paolo Zuliani, and the anonymous reviewers.

\newpage

\section*{Appendix: Omitted Proofs}

\subsubsection{Proof of Lemma 2.1}

This is a consequence of the Seidenberg's Lemma (Lemma 8.13 in \cite{grobnerbook}). It can also be directly proved as follows.
\begin{proof}
We need to show $\sqrt{J+ \langle x_1^q-x_1,...,x_n^q-x_n\rangle}= J+ \langle
x_1^q-x_1,...,x_n^q-x_n\rangle.$ Since by definition, any ideal is contained in its
radical, we only need to prove $$\sqrt{J+ \langle
x_1^q-x_1,...,x_n^q-x_n\rangle}\subseteq J+ \langle x_1^q-x_1,...,x_n^q-x_n\rangle.$$

Let $R$ denote $F_q[x_1,...,x_n]$. Consider an arbitrary polynomial $f$ in the ideal $\sqrt{J+ \langle
x_1^q-x_1,...,x_n^q-x_n\rangle}$. By definition, for some integer $s$, $f^s\in J+
\langle x_1^q-x_1,...,x_n^q-x_n\rangle$. Let $[f]$ and $[J]$ be the images of,
respectively, $f$ and $J$, in $R/\langle x_1^q-x_1,...,x_n^q-x_n\rangle$ under the
canonical homomorphism from $R$ to $R/\langle x_1^q-x_1,...,x_n^q-x_n\rangle.$ For brevity we write $S=\langle x_1^q-x_1,...,x_n^q-x_n\rangle$.

Now we have $[f]^s\in [J]$, and we further need $[f]\in [J]$. We prove, by induction on the structure of polynomials, that for any $[g]\in R/S$, $[g]^q=[g]$.  
\begin{itemize}
\item If $[g]=c x_1^{a_1}\cdots x_n^{a_n}+S$ ($c\in F_q, a_i\in N$), then
$$[g]^q=(cx_1^{a_1}\cdots x_n^{a_n}+S)^q=(cx_1^{a_1}\cdots
x_n^{a_n})^q+S=cx_1^{a_1}\cdots x_n^{a_n}+S=[g].$$
\item If $[g]=[h_1]+[h_2]$, by inductive hypothesis, $[h_1]^q=[h_1], [h_2]^q=[h_2]$, and,
since any element divisible by $p$ is zero in $F_q$ ($q=p^r$), then
$$[g]^q=([h_1]+[h_2])^q=\sum_{i=0}^{q}{q\choose
i}[h_1]^i[h_2]^{q-i}=[h_1]^q+[h_2]^q=[h_1]+[h_2]=[g]$$
\end{itemize}
Hence $[g]^q=[g]$ for any $[g]\in R/S$, without loss of generality
we can assume $s<q$ in $[f]^s$. Then, since $[f]^s\in [J]$, $
[f]=[f]^q=[f]^s\cdot[f]^{q-s}\in [J].$\qed
\end{proof}

\subsubsection{Proof of Lemma \ref{formula-ideal}}
\begin{proof}
Let $\vec a\in F_q^{n+m}$ be an assignment vector for $(\vec x, \vec y)$.

If $\vec a\in \llbracket \bigwedge_{i=1}^r f_i = 0 \rrbracket$, then $f_1(\vec a)=\cdots = f_r(\vec a) = 0$ and $\vec a\in V(\langle f_1,...,f_k\rangle)$. 

If $\vec a\in V(\langle f_1,...,f_r\rangle)$, then $\bigwedge_{i=1}^r f_i(\vec a) = 0$ is true and $\vec a \in \llbracket \bigwedge_{i=1}^r f_i = 0 \rrbracket $.\qed
\end{proof}

\subsubsection{Proof of Lemma \ref{just-projection}}
\begin{proof} We show set inclusion in both directions.
\begin{itemize}
 \item
For any $\vec b \in \llbracket\exists \vec x \varphi (\vec x; \vec y)\rrbracket$, by definition, there exists $\vec a\in F_q^n$ such that $(\vec a,\vec b)$ satisfies $\varphi(\vec x; \vec y)$. Therefore, $(\vec a,\vec b) \in \llbracket\varphi(\vec x; \vec y)\rrbracket$, and $\vec b\in \pi_n (\llbracket\varphi(\vec x; \vec y)\rrbracket)$.
\item 
For any $\vec b \in \pi_n (\llbracket\varphi(\vec x; \vec y)\rrbracket)$, there exists $\vec a\in F_q^n$ such that $(\vec a, \vec b)\in \llbracket\varphi(\vec x; \vec y)\rrbracket$. By definition, $\vec b\in \llbracket\exists x \varphi (\vec x; \vec y)\rrbracket$.\qed
\end{itemize}
\end{proof}

\subsubsection{Proof of Lemma 3.3}
\begin{proof}
We have $\llbracket \bigwedge_{i\in A_x} (x_i^q-x_i=0)\wedge \bigwedge_{i\in A_y} (y_i^q-y_i=0) \rrbracket = \llbracket \top \rrbracket$, which follows from Proposition 2.1. \qed
\end{proof}

\subsubsection{Proof of Lemma \ref{no-negation}}
\begin{proof}
Let $\varphi[\psi_1/\psi_2]$ denote substitution of $\psi_1$ in $\varphi$ by $\psi_2$. Suppose the negative atomic formulas in $\varphi$ are $f_1\neq 0,...,f_k\neq 0$. 

We introduce a new variable $z_1$, and substitute $f_1\neq 0$ by $p\cdot z_1 = 1$. Since the field $F_q$ does not have zero divisors, all the solutions for $\llbracket f_1\neq 0\rrbracket = \llbracket\exists z_1 (p\cdot z_1 =1)\rrbracket$ (the Rabinowitsch trick). 

Iterating the procedure, we can use $k$ new variables $z_1,...,z_k$ so that:
$$\llbracket\varphi\rrbracket = \llbracket \varphi[f_1\neq 0 /(\exists z_1. (p\cdot z_1 -1 = 0))]\cdots [f_k\neq 0 /(\exists z_k. (p\cdot z_k -1 = 0))]\rrbracket$$
 Since the result formula contains no more negations and the $z_i$s are new variables, it can be put into prenex form $\exists \vec z. (\varphi[f_1\neq 0 /(p\cdot z_1 -1 = 0)]\cdots [f_k\neq 0 /(p\cdot z_k -1 = 0)])$.\qed
\end{proof}

\subsubsection{Proof of Lemma \ref{no-disjunction}}
\begin{proof}$\llbracket \psi_1\vee \psi_2 \rrbracket = V(J_1)\cup V(J_2)$ follows from the definition of realization. We only need to show the second equality. Let $\vec a=(a_1,...,a_n)\in F_q^n$ be a point.

- Suppose $\vec a\in V(J_1)\cup V(J_2)$. If $\vec a\in V(J_1)$, then $(1, a_1,...,a_n)\in V(x_0J_1+(1-x_0)J_2)$. If $\vec a\in V(J_2)$, then $\langle 0, a_1,...,a_n\rangle\in V(x_0J_1+(1-x_0)J_2)$. In both cases, $\vec a\in \pi_0(V(x_0J_1+(1-x_0)J_2))$.

- Suppose $\vec a \in \pi_0( V(x_0J_1+(1-x_0)J_2))$. There exists $a_0\in F_q$ such that $( a_0,a_1,...,a_n)\in V(x_0J_1+(1-x_0)J_2)$. If $a_0\not\in \{0,1\}$, then all the polynomials in $J_1$ and $J_2$ need to vanish on $\vec a$; if $a_0=1$ then $J_1$ vanishes on $\vec a$; if  $a_0=0$ then $J_2$ vanishes on $\vec a$. In all cases, $\vec a\in V(J_1)\cup V(J_2)$.\qed
\end{proof}

\subsubsection{Proof of Theorem 5.1}
\begin{proof}
We only need to show the intermediate formulas obtained in Step 1-3 are always equivalent to the original formula $\varphi$. In Step 1, the formula is flattened with ideal operations, which preserve the realization of the formula as proved in Theorem \ref{flatten-theorem}. In Step 2, we have (by Theorem \ref{main-theorem}) $\llbracket \exists \vec x_m\exists \vec t\exists \vec s (\bigwedge_{i=1}^r (f_i = 0))\rrbracket = \llbracket \bigwedge_{i=1}^u (g_i = 0) \rrbracket.$ 

Hence the formula obtained in Step 2 is equivalent to $\varphi$. In Step 3, the substitution preserves realization of the formula because
$$\llbracket \bigwedge_{i=1}^{u}  \forall \vec x_{m-1} (g_i=0)  \rrbracket = 
\llbracket \bigwedge_{i=1}^u (\neg \exists  \vec x_{m-1} (\neg g_i=0)) \rrbracket  = 
\llbracket (\bigwedge_{i=1}^{u} (\bigvee_{j=1}^{v_i} h_{ij}\neq 0))\rrbracket,$$
where the second equality is guaranteed by Theorem \ref{main-theorem} again.

The loop terminates either at the end of Step 2 or Step 3. Hence the output quantifier-free formula $\psi$ is always in conjunctive normal form, which contains only variables $\vec y$, and is equivalent to the original formula $\varphi$.\qed
\end{proof}
\end{document}